\documentclass[preprint,5p]{elsarticle}

\newwrite\lafont
\immediate\openout\lafont=la14.mf%
\catcode`#=11
\immediate\write\lafont{%
font_identifier "LA"; font_size 14pt#;
font_coding_scheme:="LA text";
mode_setup;
boolean serifs,monospace;
xpt#:=pt#;
monospace:=false;
serifs:=true;
pair auf,ab,an;
auf=(1,7);
ab=(-1,-7);
an=(1,0.8);
ut#:=0.3;
ut=ut#*hppp;
numeric str_d,str_dp,str_dm,str_i;
numeric top,mid,base,bot,an_top,an_mid,an_bot,ang;
numeric kl_oben,gr_oben;
str_d:=1.8ut;         
str_i:=2.8ut;
str_dp:=2.3ut;
str_dm:=1.3ut;
pen va_pen;
va_pen := pencircle scaled str_d;  
pen va_pen_small;
va_pen_small := pencircle scaled str_dm;
pen va_pen_big;
va_pen_big := pencircle scaled str_dp;
top:=8ut; mid:=-3ut; bas:=0ut; bot:=0ut;
kl_oben:=19ut; gr_oben:=34ut;
an_top=16ut; an_mid=11ut; an_bot=6ut;
%
%
beginchar(oct"043", 11ut#, 19ut#, 0ut#);    
pickup va_pen;
z1=(0ut,19ut);
z2=(11ut,19ut);
draw z1{dir -70}..tension0.8..{dir 50}z2;
labels(range 1 thru 2);
endchar;
beginchar(oct"044", 12ut#, 19ut#, 0ut#);    
pickup va_pen;
z1=(0ut,19ut);
z2=(11ut,15ut);
draw z1{dir -70}..tension0.8..{dir 50}z2;
labels(range 1 thru 2);
endchar;
beginchar(oct"045", 9ut#, 19ut#, 0ut#);    
pickup va_pen;
z1=(0ut,19ut);
z2=(9ut,14ut);
draw z1{dir -70}..tension0.8..{dir 70}z2;
labels(range 1 thru 2);
endchar;
beginchar(oct"046", 6ut#, 19ut#, 0ut#);    
pickup va_pen;
z1=(0ut,19ut);
z2=(6ut,15ut);
draw z1{dir -70}..tension0.8..{dir 50}z2;
labels(range 1 thru 2);
endchar;
beginchar(133,8ut#,19ut#,0ut);   
pickup va_pen;
z1=(bot,an_bot);
z2=(top-1ut,an_top-1ut);     
draw z1{dir 55}..{dir 50}z2;
labels(range 1 thru 2);
endchar;
beginchar(134,6ut#,19ut#,0ut);   
pickup va_pen;
z1=(bot,an_bot);
z2=(top-2ut,an_top-3ut);          
draw z1{dir 55}..{dir 50}z2;
labels(range 1 thru 2);
endchar;
beginchar(146,7ut#,19ut#,0ut);   
pickup va_pen;
z1=(bot,an_bot);
z2=(top-1ut,an_top+3ut);
draw z1{dir 55}..{dir 70}z2;
labels(range 1 thru 2);
endchar;
beginchar(132,7ut#,19ut#,0ut);   
pickup va_pen;
z1=(bot,an_bot);
z2=(top-1ut,an_top-1ut);
draw z1{dir 55}..{dir 50}z2;
labels(range 1 thru 2);
endchar;
beginchar(127,13ut#,19ut#,0ut);   
pickup va_pen;                      
z1=(bot,an_bot - 6ut);
z2=(top+4.7ut,an_top -3ut);
draw z1{dir 50}..{dir 50}z2;
labels(range 1 thru 2);
endchar;
beginchar(128,8ut#,19ut#,0ut);   
pickup va_pen;                      
z1=(bot,an_bot - 6ut);
z2=(top-0ut,an_top - 10ut);
draw z1..{dir 45}z2;
labels(range 1 thru 2);
endchar;
beginchar(129,14ut#,19ut#,0ut);   
pickup va_pen;                      
z1=(bot,an_bot - 6ut);
z2=(top+6ut,an_top+3ut);
draw z1{dir 50}..{dir 70}z2;
labels(range 1 thru 2);
endchar;
beginchar(130,14ut#,19ut#,0ut);     
pickup va_pen;                        
z1=(bot,an_bot - 6ut);
z2=(top+5ut,an_top - 1ut);
draw z1{dir 45}..{dir 50}z2;
labels(range 1 thru 2);
endchar;
beginchar(154,6ut#,19ut#,0ut);   %
pickup va_pen;                      
z1=(bot,an_bot);
z2=(top-0ut,an_top - 10ut);
draw z1{right}..{dir 45}z2;
labels(range 1 thru 2);
endchar;
beginchar(131,8ut#,19ut#,0ut);     
pickup va_pen;
z1=(bot,an_bot);
z2=(top-1ut,an_top-1ut);
draw z1{dir 55}..{dir 50}z2;
labels(range 1 thru 2);
endchar;
beginchar(143,4ut#,19ut#,0ut);     
pickup va_pen;
z1=(bot    ,an_bot);
z2=(top-4ut,an_top-3ut);
draw z1{dir 55}..{dir 50}z2;
labels(range 1 thru 2);
endchar;
beginchar(144,8ut#,19ut#,0ut);     
pickup va_pen;
z1=(bot,an_bot);
z2=(top-0ut,an_top+3ut);
draw z1{dir 55}..{dir 70}z2;
labels(range 1 thru 2);
endchar;
beginchar(145,7ut#,19ut#,0ut);     
pickup va_pen;
z1=(bot,an_bot);
z2=(top-1ut,an_top-1ut);
draw z1{dir 55}..{dir 50}z2;
labels(range 1 thru 2);
endchar;
beginchar(153,2ut#,19ut#,0ut);   %
pickup va_pen;                      
z1=(bot-12ut,an_bot-6ut);
z2=(top-6ut,an_top - 10ut);
draw z1{right}..{dir 45}z2;
labels(range 1 thru 2);
endchar;
beginchar(139,1ut#,19ut#,0ut);     
pickup va_pen;
z1=(bot-20ut,an_bot-6ut);
z2=(top-8ut,an_top-1ut);
draw z1{right}..{dir 50}z2;
labels(range 1 thru 2);
endchar;
beginchar(140,1ut#,19ut#,0ut);     
pickup va_pen;
z1=(bot-20ut,an_bot-6ut);
z2=(top-6ut,an_top-2ut);
draw z1{right}..{dir 50}z2;
labels(range 1 thru 2);
endchar;
beginchar(141,2ut#,19ut#,0ut);     
pickup va_pen;
z1=(bot-20ut,an_bot-6ut);
z2=(top-6ut,an_top+3ut);
draw z1{right}..{dir 70}z2;
labels(range 1 thru 2);
endchar;
beginchar(142,2ut#,19ut#,0ut);     
pickup va_pen;
z1=(bot-18ut,an_bot-6ut);
z2=(top-6ut,an_top-1ut);
draw z1{right}..{dir 50}z2;
labels(range 1 thru 2);
endchar;
beginchar(148,6ut#,19ut#,0ut);   
pickup va_pen;                      
z1=(bot-5ut,an_bot - 6ut);
z2=(top-2ut,an_top - 10ut);
draw z1{dir 10}..{dir 45}z2;
labels(range 1 thru 2);
endchar;
beginchar(149,11ut#,19ut#,0ut);     
pickup va_pen;                        
z1=(bot-5ut,an_bot - 6ut);
z2=(top+1ut,an_top-3ut);
draw z1{dir 20}..{dir 50}z2;
labels(range 1 thru 2);
endchar;
beginchar(150,9ut#,19ut#,0ut);   
pickup va_pen;                      
z1=(bot-5ut,an_bot - 6ut);
z2=(top+0.7ut,an_top -3ut);
draw z1{dir 20}..{dir 50}z2;
labels(range 1 thru 2);
endchar;
beginchar(151,10ut#,19ut#,0ut);   
pickup va_pen;                      
z1=(bot-5ut,an_bot - 6ut);
z2=(top+2ut,an_top+3ut);
draw z1{dir 20}..{dir 70}z2;
labels(range 1 thru 2);
endchar;
beginchar(152,8ut#,19ut#,0ut);   
pickup va_pen;
z1=(bot-5ut,an_bot - 6ut);
z2=(top+0ut,an_top - 1ut);
draw z1{dir 20}..{dir 50}z2;
labels(range 1 thru 2);
endchar;
beginchar(147,0ut#,19ut#,0ut);   
pickup va_pen;
z1=(bot-20ut,an_bot-6ut);
z2=(top-8ut,an_top-10ut);
draw z1{right}..{dir 45}z2;
labels(range 1 thru 2);
endchar;
beginchar(135,1ut#,19ut#,0ut);     
pickup va_pen;
z1=(bot-23ut,an_bot-6ut);
z2=(top-8ut,an_top-1ut);
draw z1{right}..{dir 50}z2;
labels(range 1 thru 3);
endchar;
beginchar(136,4ut#,19ut#,0ut);     
pickup va_pen;
z1=(bot-23ut,an_bot-6ut);
z2=(top-4ut,an_top+3ut);
draw z1{right}..{dir 70}z2;
labels(range 1 thru 2);
endchar;
beginchar(137,1ut#,19ut#,0ut);     
pickup va_pen;
z1=(bot-20ut,an_bot-6ut);
z2=(top-7ut,an_top-1ut);
draw z1{right}..{dir 50}z2;
labels(range 1 thru 2);
endchar;
beginchar(138,1ut#,19ut#,0ut);     
pickup va_pen;
z1=(bot-23ut,an_bot-6ut);
z2=(top-6ut,an_top-2ut);
draw z1{right}..{dir 50}z2;
labels(range 1 thru 2);
endchar;
beginchar("1",14ut#,26ut#,0);
pickup va_pen;
x18=0 ut; y18=19ut;
x24=11ut; y24=26ut;
x28=5 ut; y28=0 ut;
draw z18--z24
  & z24{ab}--z28;
penlabels(18,24,28);
endchar;
beginchar("2",20ut#,26ut#,0);
pickup va_pen;
x22=17ut*.52; y22=0*.52;
x23=29ut*.52; y23=7 ut*.52;
x24=22ut*.52; y24=50ut*.52;
x28=6ut*.52; y28=7ut*.52;
x29=30ut*.52; y29=46ut*.52;
x34=6 ut*.52; y34=40ut*.52;
x38=22ut*.52; y38=22ut*.52;
x281=0ut*.52; y281=0ut*.52;
draw z34..z24{right}..z29..z38..z28{dir 225}..z281
 & z281{z28-z281}..z22{dir 330}..{an}z23;
penlabels(22,23,24,28,29,34,38,281);
endchar;
beginchar("3",20ut#,50ut#,0);
pickup va_pen;
x1=0ut*.52;   y1=4ut*.52;
x17=24ut*.52; y17=10ut*.52;
x19=12ut*.52; y19=27ut*.52;
x22=13ut*.52; y22=0 ut*.52;
x24=22ut*.52; y24=50ut*.52;
x34=6 ut*.52; y34=40ut*.52;
x38=22ut*.52; y38=22ut*.52;
draw z34..z24{right}..z19{left}
 & z19{right}..z38..z17..z22{left}..z1;
penlabels(1,17,19,22,24,34,38);
endchar;
beginchar("4",18ut#,50ut#,0);
pickup va_pen;
x4=27ut*.52;  y4=16ut*.52;
x22=13ut*.52; y22=0 ut*.52;
x24=15ut*.52; y24=50ut*.52;
x30=19ut*.52; y30=27ut*.52;
x32=16ut*.52; y32=-22ut*.52;
x56=-3ut*.52; y56=16ut*.52;
draw z24{ab}--z56
 & z56{dir 10}--z4;
draw z30--z22;
penlabels(4,22,24,30,32,56);
endchar;
beginchar("5",24ut#,50ut#,0);
pickup va_pen;
x2=8ut*.52;   y2=0ut*.52;
x31=0 ut*.52; y31=7 ut*.52;
x35=12ut*.52; y35=50ut*.52;
x38=28ut*.52; y38=22ut*.52;
x58=35ut*.52; y58=50ut*.52;
x66=2 ut*.52; y66=25ut*.52;
draw z35--z58;
draw z35--z66
 & z66{dir 20}..z38{down}..z2{left}..z31;
penlabels(2,31,35,38,58,66);
endchar;
beginchar("6",17ut#,50ut#,0);
pickup va_pen;
x23=21ut*.52; y23=8ut*.52;
x24=22ut*.52; y24=50ut*.52;
x30=15ut*.52; y30=27ut*.52;
x56=-5ut*.52; y56=15ut*.52;
x57=8ut*.52; y57=0 ut*.52;
draw z56..z30..z23..z57{left}..z56{up}..z24{an};
penlabels(23,24,30,56,57);
endchar;
beginchar("7",21ut#,50ut#,0);
pickup va_pen;
x19=12ut*.52; y19=27ut*.52;
x20=30ut*.52; y20=27ut*.52;
x28=6 ut*.52; y28=0 ut*.52;
x36=25ut*.52; y36=46ut*.52;
x37=32ut*.52; y37=50ut*.52;
x60=3 ut*.52; y60=43ut*.52;
x61=11 ut*.52; y61=50ut*.52;
draw z60..z61{right}..z36{right}..z37
 & z37--z28;
draw z19--z20;
penlabels(19,20,28,36,37,60,61);
endchar;
beginchar("8",22ut#,50ut#,0);
pickup va_pen;
x19=12ut*.52; y19=27ut*.52;
x23=25ut*.52; y23=8ut*.52;
x31=0 ut*.52; y31=9 ut*.52;
x33=11ut*.52; y33=-18ut*.52;
x34=6 ut*.52; y34=40ut*.52;
x38=22ut*.52; y38=22ut*.52;
x57=10ut*.52; y57=0 ut*.52;
x67=27ut*.52; y67=42ut*.52;
x68=18ut*.52; y68=50ut*.52;
x1=27ut*.52; y1=50ut*.52;
x2=31ut*.52; y2=42ut*.52;
x3=39ut*.52; y3=50ut*.52;
draw z31{up}..z19..z67{up}..z68{left}..z34{down}..z38..z23..z57{left}..z31{up};
draw z1{dir -70}..z2{right}..z3;
penlabels(19,23,31,33,34,38,57,67,68,1,2,3);
endchar;
beginchar("9",20ut#,50ut#,0);
pickup va_pen;
x1=0*.52;     y1=4ut*.52;
x3=29ut*.52;  y3=50ut*.52;
x11=32ut*.52; y11=50ut*.52;
x34=6 ut*.52; y34=40ut*.52;
x37=29ut*.52; y37=40ut*.52;
x38=25ut*.52; y38=29ut*.52;
x57=10ut*.52; y57=0 ut*.52;
x68=18ut*.52; y68=50ut*.52;
x69=15ut*.52; y69=22ut*.52;
x70=26ut*.52; y70=40ut*.52;
draw z68{left}..z34..z69{right}..z70..z68{left};
draw z3{ab}..z38{dir 255}..z57{left}..z1;
penlabels(1,3,11,34,37,38,57,68,69,70);
endchar;
beginchar("0",24ut#,50ut#,0);
pickup va_pen;
x1=30ut*.52; y1=50ut*.52;
x2=34ut*.52; y2=42ut*.52;
x3=42ut*.52; y3=50ut*.52;
x17=23ut*.52; y17=10ut*.52;
x20=30ut*.52; y20=41ut*.52;
x24=23ut*.52; y24=50ut*.52;
x46=28ut*.52; y46=22ut*.52;
x56=0ut*.52; y56=15ut*.52;
x57=11ut*.52; y57=0 ut*.52;
x58=36ut*.52; y58=50ut*.52;
draw z20..z24{left}..z56{down}..z57{right}..z17..z46..cycle;
draw z1{dir -70}..z2{right}..z3;
penlabels(1,2,3,17,20,24,46,56,57,58);
endchar;
beginchar("+",30ut#,20ut#, 0ut#);
pickup va_pen;
z1=(10ut,15ut);
z2=(25ut,15ut);
z3=(17ut,7ut);
z4=(18ut,23ut);
draw z1..z2; draw z3..z4;
labels(range 1 thru 4);
endchar;
beginchar("-",30ut#,20ut#, 0ut#);
pickup va_pen;
z1=(10ut,15ut);
z2=(25ut,15ut);
draw z1..z2;
labels(range 1 thru 2);
endchar;
beginchar(oct"025",35ut#,20 ut#, 0ut#);  
pickup va_pen_small;
z1=(10ut,15ut);
z2=(30ut,15ut);
draw z1..z2;
labels(range 1 thru 2);
endchar;
beginchar(oct"026",40ut#,20 ut#, 0ut#);   
pickup va_pen_small;
z1=(10ut,15ut);
z2=(35ut,15ut);
draw z1..z2;
labels(range 1 thru 2);
endchar;
beginchar("*",30 ut#,20 ut#, 0ut#);
pickup va_pen;
z9=(15ut,15ut);
z1=z9+(0ut,-8ut);
z2=z9+(8ut,0ut);
z3=z9+(0ut,8ut);
z4=z9+(-8ut,0ut);
z5=z9+(-5ut,-5ut);
z6=z9+(5ut,-5ut);
z7=z9+(5ut,5ut);
z8=z9+(-5ut,5ut);
draw z1..z3; draw z2..z4; draw z5..z7; draw z6..z8;
labels(range 1 thru 9);
endchar;
beginchar(":",15 ut#,20 ut#, 0ut#);
pickup pencircle scaled  str_i;
z1=(10ut,15ut);
z2=(9ut,5ut);
drawdot z1; drawdot z2;
labels(range 1 thru 2);
endchar;
beginchar("<",30 ut#,20 ut#,0 ut#);
pickup va_pen;
z1=(28ut,23ut);
z2=(10ut,15ut);
z3=(26ut,7ut);
draw z1..z2 & z2..z3;
labels(range 1 thru 3);
endchar;
beginchar(">", 30ut#, 20ut#, 0ut#);
pickup va_pen;
z1=(13ut,23ut);
z2=(30ut,15ut);
z3=(12ut,7ut);
draw z1..z2 & z2..z3;
labels(range 1 thru 3);
endchar;
beginchar("=", 30ut#,20 ut#, 0ut#);
pickup va_pen;
z1=(10ut,15ut);
z2=(25ut,15ut);
z3=(x1,y1-8ut);
z4=(x2,y2-8ut);
draw z1..z2  ; draw z3..z4;
labels(range 1 thru 4);
endchar;
beginchar("(", 13ut#, 34ut#, 0ut#);
pickup va_pen;
z1=(11ut,34ut);
z2=(3ut,-2ut);
draw z1{dir -115}..{dir -80}z2;
labels(range 1 thru 2);
endchar;
beginchar(")",13 ut#,34 ut#, 0ut#);
pickup va_pen;
z1=(11ut,34ut);
z2=(3ut,-2ut);
draw z1{dir -80}..{dir -115}z2;
labels(range 1 thru 2);
endchar;
beginchar("/",15 ut#,34 ut#, 0ut#);
pickup va_pen;
z1=(12ut,34ut);
z2=(3ut,-2ut);
draw z1..z2;
labels(range 1 thru 2);
endchar;
beginchar(oct"022",18 ut#,20 ut#, 7ut#);   
pickup va_pen;
z1=(5ut,3ut);
z2=(3ut,-7ut);
z3=(x1+8ut,y1);
z4=(x2+8ut,y2);
draw z1{dir -90}..z2;
draw z3{dir -90}..z4;
labels(range 1 thru 4);
endchar;
beginchar(oct"021", 18ut#, 36ut#, ut#);   
pickup va_pen;
z1=(5ut,36ut);
z2=(3ut,26ut);
z3=(x1+8ut,y1);
z4=(x2+8ut,y2);
draw z1{dir -90}..z2;  draw z3{dir -90}..z4;
labels(range 1 thru 4);
endchar;
beginchar(oct"20", 18ut#, 36ut#, ut#); 
pickup va_pen;
z1=(5ut,36ut);
z2=(3ut,26ut);
z3=(x1+8ut,y1);
z4=(x2+8ut,y2);
draw z1{dir -90}..z2;  draw z3{dir -90}..z4;
labels(range 1 thru 4);
endchar;
beginchar(oct"042", 18ut#, 36ut#, ut#); 
pickup va_pen;
z1=(5ut,36ut);
z2=(7ut,26ut);
z3=(x1+8ut,y1);
z4=(x2+8ut,y2);
draw z1{dir -90}..z2;  draw z3{dir -90}..z4;
labels(range 1 thru 4);
endchar;
beginchar(oct"140", 7ut#, 36ut#, ut#); 
pickup va_pen;
z1=(5ut,36ut);
z2=(7ut,26ut);
draw z1{dir -90}..z2;
labels(range 1 thru 2);
endchar;
beginchar(oct"000", 7ut#, 36ut#, ut#); 
pickup va_pen;
z1=(5ut,36ut);
z2=(7ut,26ut);
draw z1{dir -90}..z2;
labels(range 1 thru 2);
endchar;
beginchar(oct"004",18ut#,15ut#,0ut#);   
pickup va_pen_big;
z1=(-3ut,22ut);
z2=z1+(9ut,0ut);
z3=z1+(-1ut,-1ut);
z4=z2+(-1ut,-1ut);
draw z1..z3; draw z2..z4;
labels(range 1 thru 4);
endchar;
beginchar("?", 26ut#,34 ut#,0 ut#);
pickup va_pen;
z1=(10ut,30ut);
z2=(18ut,34ut);
z3=(23ut,30ut);
z4=(15ut,20ut);
z5=(8ut,10ut);
z6=(14ut,6ut);
z7=(21ut,10ut);
z8=(15ut,1ut);
draw z1..z2{right}..z3..tension1.4..z4..tension1.4..z5{down}..z6{right}..z7;
pickup pencircle scaled  str_i;
drawdot z8;
labels(range 1 thru 8);
endchar;
beginchar("!",16 ut#,34 ut#, 0ut#);
pickup va_pen;
z1=(16ut,32ut);
z2=(9ut,9ut);
z3=(8ut,2ut);
draw z1..z2;
pickup pencircle scaled  str_i;
drawdot z3;
labels(range 1 thru 3);
endchar;
beginchar(";",7 ut#,20 ut#,10 ut#);
pickup va_pen;
z1=(6ut,5ut);
z2=(6ut,1ut);
z3=(2ut,-7ut);
draw z2{dir -90}..z3{dir -110};
pickup pencircle scaled  str_i;
drawdot z1;
labels(range 1 thru 3);
endchar;
beginchar(".",6 ut#,10 ut#,0 ut#);
pickup pencircle  scaled str_i;
z1=(5ut,1ut);
drawdot z1;
labels(range 1 thru 2);
endchar;
beginchar(",", 7ut#, 10ut#,7 ut#);
pickup va_pen;
z1=(6ut,3ut);
z2=(2ut,-7ut);
drawdot z1;
draw z1+(1ut,0ut){dir -80}..z2{dir -125};
labels(range 1 thru 2);
endchar;
beginchar(oct"47", 7ut#, 40ut#,0 ut#);  
pickup va_pen;
z1=(6ut,35ut);
z2=(2ut,27ut);
drawdot z1;
draw z1+(1ut,0ut){dir -80}..z2{dir -125};
labels(range 1 thru 8);
endchar;
beginchar(oct"001", 7ut#, 40ut#,0 ut#);  
pickup va_pen;
z1=(6ut,35ut);
z2=(2ut,27ut);
drawdot z1;
draw z1+(1ut,0ut){dir -80}..z2{dir -125};
labels(range 1 thru 8);
endchar;
beginchar("a",18 ut#,19 ut#,0);       
pickup va_pen;
z1=(6ut,19ut);
z2=(-4.5ut,8ut);
z3=(0ut,0ut);
z4=(10ut,13ut);
z5=(13ut,19ut);
z6=(8ut,5ut);
z7=(10ut,0ut);
z8=(18ut,6ut);
draw z1{dir 190}..{dir 250}z2{dir 250}..z3{right}..{dir 70}z4{dir 70}..cycle;
draw z5--z6{z6-z5}..z7{dir 10}..{dir 55}z8;
labels(range 1 thru 8);
endchar;
beginchar(oct"344",18 ut#,30ut#,0);       
pickup va_pen;
z1=(6ut,19ut);
z2=(-4.5ut,8ut);
z3=(0ut,0ut);
z4=(10ut,13ut);
z5=(13ut,19ut);
z6=(8ut,5ut);
z7=(10ut,0ut);
z8=(18ut,6ut);
z9=(4ut,24ut);
z10=(9ut,24ut);
z11=(6.0ut,30ut);
z12=(11.0ut,30ut);
draw z1{dir 190}..{dir 250}z2{dir 250}..z3{right}..{dir 70}z4{dir 60}..cycle;
draw z5--z6{z6-z5}..z7{dir 10}..{dir 55}z8;
draw z9--z11;
draw z10--z12;
labels(range 1 thru 12);
endchar;
beginchar("b",12 ut#,34 ut#,0);
pickup va_pen;
z1=(0ut,13ut);
z2=(12ut,31ut);
z3=(9ut,34ut);
z4=(4ut,0ut);
z5=(12ut,19ut);
z6=(25ut,15ut);
z7=(3ut,23ut);
draw z1{dir 50}..z2{up}..z3{dir 220}..z7..{dir 260}z1{dir 260}
     ..z4{dir 10}..z5{dir 110};
%
%
labels(range 1 thru 7);
endchar;
beginchar("c",10 ut#,19 ut#,0);
pickup va_pen;
z1=(10ut,18ut);
z2=(6ut,19ut);
z3=(-4.5ut,8ut);
z4=(2ut,0ut);
z5=(10ut,6ut);
draw z1..z2{dir 190}..{dir 250}z3{dir 250}..z4{dir 10}..{dir 55}z5;
labels(range 1 thru 5);
endchar;
beginchar("d",18 ut#,34 ut#,0);
pickup va_pen;
z1=(6ut,19ut);
z2=(-4.5ut,8ut);
z3=(0ut,0ut);
z4=(11ut,13ut);
z5=(18ut,34ut);
z6=(9ut,5ut);
z7=(10ut,0ut);
z8=(25ut,6ut);
z9=(18ut,6ut);
draw z1{dir 190}..{dir 250}z2{dir 250}..z3{right}..z4{dir 70}..cycle;
draw z5--z6{z6-z5}..z7{dir 10}..z9{dir 55};
labels(range 1 thru 9);
endchar;
beginchar("e",13 ut#,19 ut#,0);
pickup va_pen;
z1=(6ut,19ut);
z2=(4ut,0ut);
z3=(13ut,6ut);
z4=(0ut,6ut);
z5=(7ut,12ut);
z6=(0ut,7ut);
draw z4{dir 45}..z5
     ..z1{dir 210}..{dir 250}z6{dir 250}..z2{dir 10}..z3{dir 55};
labels(range 1 thru 6);
endchar;
beginchar(oct"037",21 ut#,19 ut#,0);  
pickup va_pen;
z1=(15ut,19ut);
z2=(7ut,12ut);
z3=(12ut,0ut);
z4=(21ut,6ut);
z5=(14ut,10ut);
z6=(18ut,14ut);
z7=(8ut,8ut);
z8=(0ut,19ut);
draw z8{dir -70}..z2..z5..z6..z1{left}
     ..{dir 260}z7{dir 260}..z3{right}..z4{dir 55};
labels(range 1 thru 8);
endchar;
beginchar("f",9 ut#,34 ut#,15ut#);
pickup va_pen;
z1=(0ut,13ut);
z2=(12ut,32ut);
z3=(9ut,34ut);
z4=(4ut,26ut);
z5=(-7ut,-15ut);
z6=(-4ut,5ut);
z7=(3ut,2ut);
z8=(9ut,6ut);
draw z1{dir 50}..z2{up}..z3{dir 220}...z4{z5-z4}--z5;
draw z6..z7...z8{dir 55};
labels(range 1 thru 8);
endchar;
beginchar("g",17 ut#,19 ut#,15ut#);
pickup va_pen;
z1=(11ut,13ut);
z2=(0ut,0ut);
z3=(-4.5ut,8ut);
z4=(6ut,19ut);
z10=(13ut,19ut);
z11=(5ut,-6ut);
z12=(0ut,-15ut);
z13=(-3ut,-11ut);
z14=(1ut,-5ut);
z15=(17ut,6ut);
draw z1{dir 250}..z2{left}..{dir 70}z3{dir 70}..{dir 10}z4..cycle;
draw z10--z11{z11-z10}..z12{left}..z13{up}..z14...z15{dir 55};  
labels(range 1 thru 15);
endchar;
beginchar("h",20 ut#,34 ut#,0);
pickup va_pen;
z1=(0ut,13ut);
z2=(12ut,31ut);
z3=(9ut,34ut);
z4=(3ut,23ut);
z5=(-4ut,0ut);
z6=(11ut,17ut);
z7=(13ut,15ut);
z8=(10ut,3ut);
z9=(12ut,0ut);
z10=(20ut,6ut);
draw z1{dir 50}..z2{up}..z3{dir 220}..z4..z1--z5
  & z5{dir 70}..tension1.3..z6{right}..z7{down}..z8{z8-z7}..z9{dir 10}..z10{dir 55};
labels(range 1 thru 10);
endchar;
beginchar("i",5 ut#,19 ut#,0);
pickup va_pen;
z1=(0ut,19ut);
z2=(-5ut,4ut);
z3=(-3ut,0ut);
z4=(5ut,6ut);
z5=(2.5ut,28ut);
draw z1--z2{z2-z1}..z3{dir 10}..z4{dir 55};
pickup pencircle scaled  str_i;
drawdot z5;
labels(range 1 thru 5);
endchar;
beginchar("j",7 ut#,19 ut#,15ut#);
pickup va_pen;
z1=(2.5ut,28ut);
z2=(0ut,19ut);
z3=(-4ut,0ut);
z4=(-12ut,-15ut);
z5=(7ut,6ut);
draw z2--z3{z3-z2}...z4{left}..z3{z5-z3}...{dir 55}z5;
pickup pencircle scaled  str_i;
drawdot z1;
labels(range 1 thru 5);
endchar;
beginchar("k", 19 ut#, 34ut#,0);
pickup va_pen;
z1=(0ut,13ut);
z2=(12ut,31ut);
z3=(9ut,34ut);
z4=(3ut,23ut);
z5=(-4ut,0ut);
z6=(11ut,17ut);
z7=(13ut,15ut);
z8=(3ut,9ut);
z9=(11ut,0ut);
z10=(19ut,6ut);
z11=(16ut,2ut);
draw z1{dir 50}..z2{up}..z3{dir 220}...z4{z5-z4}--z5
  & z5{dir 70}..tension1.3..z6{right}..z7{down}..z8
  & z8{right}..z9{dir 10}..z10{dir 55};
labels(range 1 thru 11);
endchar;
beginchar("l",9 ut#,34 ut#,0);
pickup va_pen;
z1=(0ut,13ut);
z2=(12ut,31ut);
z3=(9ut,34ut);
z4=(1ut,0ut);
z5=(9ut,6ut);
z6=(7ut,2ut);
z7=(3ut,23ut);
draw z1{dir 50}..z2{up}..z3{dir 220}..z7{z1-z7}..z1{z1-z7}    
     ..{right}z4{dir 10}..z5{dir 55};
labels(range 1 thru 7);
endchar;
beginchar("m",36 ut#,19 ut#,0);
pickup va_pen;
z1=(5ut,15ut);
z2=(0ut,0ut);
z3=(16.5ut,19ut);
z4=(18.5ut,16ut);
z5=(13.5ut,0ut);
z6=(29ut,19ut);
z7=(31ut,16ut);
z8=(27ut,3ut);
z9=(28ut,0ut);
z10=(36ut,6ut);
z11=(0ut,15ut);
z12=(4ut,19ut);
draw z11{dir 50}..{curl4}z12{curl6}..{z2-z1}z1--z2;
draw z2{z1-z2}..tension1.3..{curl2}z3{curl4}..{z5-z4}z4--z5;
draw z5{z4-z5}..tension1.3..{curl2}z6{curl4}..{z8-z7}z7
      --z8{z8-z7}..z9{dir 10}..z10{dir 55};
labels(range 1 thru 12);
endchar;
beginchar("n",24 ut#,19 ut#,0);
pickup va_pen;
z1=(5ut,15ut);
z2=(0ut,0ut);
z3=(17ut,19ut);
z4=(19.3ut,16ut);
z5=(15ut,3ut);
z6=(16ut,0ut);
z7=(24ut,6ut);
z8=(0ut,15ut);
z9=(4ut,19ut);
draw z8{dir 50}..{curl4}z9{curl6}..{z2-z1}z1--z2;
draw z2{z1-z2}..tension1.3..{curl2}z3{curl4}
     ..{z5-z4}z4--z5{z5-z4}..z6{dir 10}..z7{dir 55};
labels(range 1 thru 9);
endchar;
beginchar("o",11 ut#,19 ut#,0);
pickup va_pen;
z1=(11ut,16ut);
z2=(0ut,0ut);
z3=(-4.5ut,8ut);
z4=(6ut,19ut);
%
draw z1{dir -75}..z2{left}..{dir 70}z3{dir 70}
     ..{dir 10}z4..cycle;
labels(range 1 thru 4);
endchar;
beginchar(oct"366",11 ut#,30 ut#,0);       
pickup va_pen;
z1=(11ut,16ut);
z2=(0ut,0ut);
z3=(-4.5ut,8ut);
z4=(6ut,19ut);
z5=(6ut,24ut);
z6=(11ut,24ut);
z7=(8.0ut,30ut);
z8=(13.0ut,30ut);
draw z1{dir -75}..z2{left}..{dir 70}z3{dir 70}
     ..{dir 10}z4..cycle;
draw z5--z7;
draw z6--z8;
labels(range 1 thru 8);
endchar;
beginchar("p",18 ut#,19 ut#,15ut#);
pickup va_pen;
z1=(0ut,19ut);
z2=(-10ut,-15ut);
z3=(-4.5ut,4ut);
z4=(11.5ut,19ut);
z5=(13.0ut,15ut);
z6=(9ut,4ut);
z7=(10ut,0ut);
z8=(17ut,1ut);
z9=(18ut,6ut);
z10=(6ut,16ut);
draw z1--z2;
draw z3{dir 50}..z10..z4{right}..z5{z6-z5}--z6{z6-z5}..z7{dir 10}..z9{dir 55};
labels(range 1 thru 10);
endchar;
beginchar("q",23 ut#,19 ut#,15ut#);
pickup va_pen;
z1=(9ut,13ut);
z2=(0ut,0ut);
z3=(-4.5ut,8ut);
z4=(6ut,19ut);
z5=(11ut,19ut);
z6=(1ut,-15ut);
z7=(6ut,0ut);
z8=(23ut,19ut);
draw z1{dir 250}..z2{left}..{dir 70}z3{dir 70}..{dir 10}z4..cycle;
draw z5--z6;
draw z7{dir 40}..z8{dir 70};
labels(range 1 thru 8);
endchar;
beginchar("r", 10ut#,19 ut#,0);  
pickup va_pen;
z1=(5ut,15ut);
z2=(4ut,11ut);    
z3=(10ut,19ut);
z4=(0ut,0ut);
z5=(0ut,15ut);
z6=(4ut,19ut);
draw z5{dir 50}..{curl4}z6{curl6}..{z4-z1}z1--z4;
draw z2..z3;
labels(range 1 thru 6);
endchar;
beginchar("s",4 ut#,19 ut#,0);
pickup va_pen;
z1=(-10ut,7ut);
z2=(0ut,19ut);
z3=(3ut,6ut);
z4=(-2ut,0ut);
z5=(-6ut,2ut);
draw z2{dir 260}..tension0.8..z3{down}..z4{left}..z5;
labels(range 1 thru 5);
endchar;
beginchar("t",7 ut#,34 ut#,0);
pickup va_pen;
z1=(4ut,27ut);
z2=(-3.5ut,3ut);
z3=(-1ut,0ut);
z4=(7ut,6ut);
z5=(-4ut,18ut);
z6=(7ut,18ut);
draw z1..z2{z2-z1}..z3{dir 10}..z4{dir 55};
draw z5--z6;
labels(range 1 thru 6);
endchar;
beginchar("u",19 ut#,19 ut#,0);
pickup va_pen;
z1=(0ut,19ut);
z2=(-5ut,3ut);
z3=(0ut,0ut);
z4=(14ut,19ut);
z5=(9ut,3ut);
z6=(11ut,0ut);
z7=(19ut,6ut);
z8=(17ut,2ut);
draw z1--z2{z2-z1}..z3{dir 20}..tension1.3..z4{z4-z5}
  & z4--z5{z5-z4}..z6{dir 10}..{dir 55}z7;
labels(range 1 thru 8);
endchar;
beginchar(oct"374",19 ut#,30 ut#,0);       
pickup va_pen;
z1=(0ut,19ut);
z2=(-5ut,3ut);
z3=(0ut,0ut);
z4=(14ut,19ut);
z5=(9ut,3ut);
z6=(11ut,0ut);
z7=(19ut,6ut);
z8=(6ut,24ut);
z9=(11ut,24ut);
z10=(7.5ut,30ut);
z11=(12.5ut,30ut);
draw z1--z2{z2-z1}..z3{dir 20}..tension1.3..z4{z4-z5}
  & z4--z5{z5-z4}..z6{dir 10}..{dir 55}z7;
draw z8--z10;
draw z9--z11;
labels(range 1 thru 12);
endchar;
beginchar("v",17 ut#,19 ut#,0);
pickup va_pen;
z1=(7ut,15ut);
z2=(3ut,5ut);
z3=(9ut,0ut);
z4=(17ut,19ut);
z5=(30ut,15ut);
z6=(0ut,15ut);
z7=(4ut,19ut);
draw z6{dir 50}..z7{right}..{z2-z1}z1..z2{z2-z1}..z3{dir 20}..z4;
%
%
labels(range 1 thru 7);
endchar;
beginchar("w",29 ut#,19 ut#,0);
pickup va_pen;
z1=(8ut,15ut);
z2=(5ut,6ut);
z3=(7ut,0ut);
z4=(16ut,7ut);
z5=(19ut,19ut);
z6=(21ut,0ut);
z7=(29ut,19ut);
z8=(42ut,15ut);
z9=(0ut,15ut);
z10=(4ut,19ut);
draw z9{dir 50}..z10{right}..{z2-z1}z1--z2{z2-z1}..z3{right}..z4--z5
  & z5--z4..z6{right}..z7{dir 100};
%
labels(range 1 thru 8);
endchar;
beginchar("x",19 ut#,19 ut#,0);
pickup va_pen;
z1=(0ut,15ut);
z2=(5ut,19ut);
z3=(9.5ut,13ut);
z4=(7.5ut,6ut);
z5=(0ut,0ut);
z6=(1ut,6ut);
z7=(13ut,10ut);
z8=(16ut,19ut);
z9=(11ut,0ut);
z10=(19ut,6ut);
draw z1{dir 50}..z2{right}..{z4-z3}z3--z4{z4-z3}..z5{left}..{z7-z6}z6;
draw z6--z7{z7-z6}..z8{left}..{z4-z3}z3;
draw z4{z4-z3}..z9{dir 10}..z10{dir 55};
labels(range 1 thru 10);
endchar;
beginchar("y",19 ut#,19 ut#,15ut#);
pickup va_pen;
z1=(0ut,19ut);
z2=(-4.4ut,5ut);
z3=(-2ut,0ut);
z4=(12ut,19ut);
z5=(2ut,-13ut);
z6=(-1ut,-15ut);
z7=(-4ut,-12ut);
z8=(1ut,-5ut);
z9=(19ut,6ut);
draw z1--z2{z2-z1}..z3{dir 20}..z4{z4-z5}
  & z4--z5..z6{left}..z7{up}..z8{z9-z8}..z9{dir 55};
labels(range 1 thru 9);
endchar;
beginchar("z",18 ut#,19 ut#,0ut#);
pickup va_pen;
z1=(0ut,15ut);
z2=(3ut,19ut);
z3=(11ut,17ut);
z4=(16ut,19ut);
z5=(-3ut,0ut);
z6=(0ut,3ut);
z7=(10ut,0ut);
z8=(18ut,6ut);
draw z1{dir 50}..z2{right}..z3..z4{dir 55}
  & z4--z5
  & z5{dir 30}..z6{right}..z7{dir 10}..z8{dir 55};
labels(range 1 thru 8);
endchar;
beginchar("A",35ut#,34ut#,0);
pickup va_pen;
z1=(0ut,2ut);
z2=(12ut,6ut);
z3=(35ut,34ut);
z4=(26ut,0ut);
z5=(12ut,14ut);
z6=(27ut,6ut);
z7=(26ut,28ut);
z8=(22ut,16ut);
z9=(22ut,3ut);
z10=(35ut,6ut);
z11=(3ut,0ut);
draw z1{down}..z11..z2..z7..z3
  & z3--z4;
  draw z6{dir 100}..z8..z5..z9..{dir 55}z10;
labels(range 1 thru 11);
endchar;
beginchar(oct"304",35ut#,44ut#,0);   
pickup va_pen;
z1=(0ut,2ut);
z2=(12ut,6ut);
z3=(35ut,34ut);
z4=(26ut,0ut);
z5=(12ut,14ut);
z6=(27ut,6ut);
z7=(26ut,28ut);
z8=(22ut,16ut);
z9=(22ut,3ut);
z10=(35ut,6ut);
z11=(3ut,0ut);
z12=(32ut,38ut);
z13=(33.5ut,44ut);
z14=(37ut,38ut);
z15=(38.5ut,44ut);
draw z1{down}..z11..z2..z7..z3
  & z3--z4;
  draw z6{dir 100}..z8..z5..z9..{dir 55}z10;
  draw z12..z13;
  draw z14..z15;
labels(range 1 thru 15);
endchar;
beginchar("B",45ut#,34ut#,0);
pickup va_pen;
z1=(8ut,28ut);
z2=(24ut,32ut);
z3=(31ut,28ut);
z4=(31ut,8ut);
z5=(0ut,2ut);
z6=(21ut,34ut);
z7=(23ut,0ut);
z8=(16ut,2ut);
z9=(12ut,6ut);
z10=(6ut,0ut);
z11=(21ut,18ut);
z12=(20ut,30ut);
draw z1..z6{right}..z3{down}..z11{left}
  & z11{right}..z4{down}..z7{left}..z8;
  draw z12--z9..z10..z5;
labels(range 1 thru 12);
endchar;
beginchar("C",19ut#,34ut#,0ut#);
pickup va_pen;
z1=(24ut,26ut);
z2=(20ut,34ut);
z3=(1ut,10ut);
z4=(7ut,0ut);
z5=(19ut,6ut);
z6=(14ut,2ut);
z7=(12ut,20ut);
z8=(0ut,24ut);
z9=(6ut,23ut);
draw z8..z7..z1..z2{left}..z9..z3{z3-z9}..z4{right}..z6..{dir 55}z5;
labels(range 1 thru 9);
endchar;
beginchar("D",39ut#,34ut#,0ut#);
pickup va_pen;
z1=(3ut,30ut);
z2=(21ut,32ut);
z3=(26ut,20ut);
z4=(18ut,0ut);
z5=(15ut,30ut);
z6=(8ut,6ut);
z7=(3ut,0ut);
z8=(0ut,3ut);
z9=(14ut,34ut);
draw z5--z6{z7-z6}..z7..z8..{right}z6{z4-z6}..z4{dir 30}..z3{up}..z2..z9..z1;
labels(range 1 thru 9);
endchar;
beginchar("E",20ut#,34ut#,0ut#);
pickup va_pen;
z1=(23ut,31ut);
z2=(6ut,25ut);
z3=(15ut,18ut);
z4=(0ut,7ut);
z5=(9ut,0ut);
z6=(20ut,6ut);
draw z1..z2{down}..z3
  & z3{left}..z4..z5{right}..{dir 55}z6;   
labels(range 1 thru 6);
endchar;
beginchar("F",32ut#,34ut#,0ut#);
pickup va_pen;
z1=(14ut,34ut);
z2=(12ut,8ut);
z3=(2ut,22ut);
z4=(32ut,34ut);
z5=(9ut,18ut);
z6=(21ut,18ut);
z7=(18ut,31ut);
z8=(6ut,0ut);
z9=(0ut,2ut);
draw z3{up}..z1{right}..z4;
draw z7--z2{z2-z7}..z8..z9;
draw z5--z6;
labels(range 1 thru 9);
endchar;
beginchar("G",25ut#,34ut#,15ut#);
pickup va_pen;
z1=(22ut,26ut);
z2=(19ut,34ut);
z3=(0ut,8ut);
z4=(5ut,0ut);
z5=(20ut,16ut);
z6=(15ut,0ut);
z7=(3ut,-15ut);
z8=(1ut,-11ut);
z9=(8ut,-4ut);
z10=(25ut,6ut);
z11=(0ut,23ut);
draw z11{dir -30}..z1..z2{left}..z3{down}..z4{right}..z5{z5-z6}
  & z5--z6{z6-z5}..z7{left}..z8..z9{z6-z9}..{dir 55}z10;
labels(range 1 thru 11);
endchar;
beginchar("H",30ut#,34ut#,0ut#);
pickup va_pen;
z1=(1ut,29ut);
z2=(5ut,34ut);
z3=(10ut,32ut);
z4=(15ut,34ut);
z5=(10ut,12ut);
z6=(4ut,0ut);
z7=(-1ut,4ut);
z8=(7ut,14ut);
z9=(27ut,24ut);
z10=(31ut,31ut);
z11=(27ut,34ut);
z12=(23ut,31ut);
z13=(17ut,6ut);
z14=(22ut,0ut);
z15=(30ut,6ut);
draw z1..z2{right}..z3..z4;
draw z4--z5{dir 260}..z6..z7..z8{dir 35}--z9{dir 35}..z10..z11..z12--z13;
draw z13{dir 260}..z14{dir 10}..{dir 55}z15;
labels(range 1 thru 15);
endchar;
beginchar("I",29ut#,34ut#,0ut#);
pickup va_pen;
z1=(9ut,34ut);
z2=(25ut,34ut);
z3=(17ut,8ut);
z4=(9ut,0ut);
z5=(1ut,3ut);
z6=(20ut,32ut);
z7=(6ut,31ut);
draw z7{dir 80}..z1{right}..z6..z2{right}
  & z2--z3{z3-z2}..z4{left}..z5;
labels(range 1 thru 7);
endchar;
beginchar("J",22ut#,34ut#,-15ut#);
pickup va_pen;
z1=(9ut,34ut);
z2=(25ut,34ut);
z3=(13ut,0ut);
z4=(4ut,-15ut);
z5=(0ut,-11ut);
z6=(7ut,-4ut);
z7=(22ut,6ut);
z8=(6ut,31ut);
z9=(20ut,32ut);
draw z8..z1{right}..z9..z2{right}
 & z2--z3{z3-z2}...z4{left}..z5{dir 70}..z6{z3-z6}..{dir 55}z7;
labels(range 1 thru 9);
endchar;
beginchar("K",34ut#,34ut#,0ut#);
pickup va_pen;
z1=(5ut,30ut);
z2=(9ut,34ut);
z3=(14ut,32ut);
z4=(19ut,34ut);
z5=(12ut,8ut);
z6=(6ut,0ut);
z7=(0ut,2ut);
z8=(16ut,19ut);
z9=(28ut,30ut);
z10=(35ut,34ut);
z11=(35ut,37ut);
z12=(20ut,6ut);
z13=(26ut,0ut);
z14=(34ut,6ut);
draw z1..z2{right}..z3..z4;
draw z4--z5{z5-z4}..z6..z7;
draw z8{dir 20}..z9{z9-z8}..z10{dir 20};
draw z8--z12{z12-z8}..z13{dir 10}..{dir 55}z14;
labels(range 1 thru 14);
endchar;
beginchar("L",27ut#,34ut#,0ut#);
pickup va_pen;
z1=(6ut,23ut);
z2=(27ut,26ut);
z3=(27ut,6ut);
z4=(20ut,0ut);
z5=(17ut,29ut);
z6=(9ut,7ut);
z7=(4ut,0ut);
z8=(0ut,3ut);
z9=(22ut,34ut);
draw z1{dir 330}..z2..z9..{z6-z5}z5--z6{z6-z5}..z7..z8{up}
     ..z6{right}..z4{dir 10}..{dir 55}z3;
labels(range 1 thru 9);
endchar;
beginchar("M",53ut#,34ut#,0ut#);
pickup va_pen;
z1=(0ut,1ut);
z2=(10ut,6ut);
z3=(31ut,34ut);
z4=(24ut,0ut);
z5=(48ut,34ut);
z6=(40ut,4ut);
z7=(45ut,0ut);
z8=(53ut,6ut);
z9=(49ut,2ut);
z10=(4ut,0ut);
z11=(24ut,26ut);
draw z1..z10..{z11-z2}z2..z11{z11-z2}..z3
 & z3--z4
 & z4--z5
 &z5--z6{z6-z5}..z7{dir 10}..{dir 55}z8;
labels(range 1 thru 11 );
endchar;
beginchar("N",45ut#,34ut#,0ut#);
pickup va_pen;
z1=(0ut,1ut);
z2=(9ut,5ut);
z3=(28ut,34ut);
z4=(23ut,0ut);
z5=(43ut,31ut);
z6=(48ut,34ut);
z7=(4ut,0ut);
z8=(23ut,28ut);
draw z1..z7..{z8-z2}z2..z8{z3-z2}..z3
 & z3--z4
 & z4--z5{z5-z4}..z6{dir 20};
labels(range 1 thru 9);
endchar;
beginchar("O",34ut#,34ut#,0ut#);
pickup va_pen;
z1=(24ut,23ut);
z2=(9ut,0ut);
z3=(1ut,16ut);
z4=(18ut,34ut);
z5=(23ut,34ut);
z6=(28ut,29ut);
z7=(33ut,34ut);
draw z1{down}..z2{left}..z3{dir 80}..z4{right}..cycle;
draw z5.{down}..z6..{dir 50}z7;
labels(range 1 thru 7);
endchar;
beginchar(oct"326",34ut#,46ut#,0ut#);   
pickup va_pen;
z1=(24ut,23ut);
z2=(9ut,0ut);
z3=(1ut,16ut);
z4=(18ut,34ut);
z5=(23ut,34ut);
z6=(28ut,29ut);
z7=(33ut,34ut);
z8=(16.5ut,44ut);
z9=(21.5ut,44ut);
z10=(15ut,38ut);
z11=(20ut,38ut);
draw z1{down}..z2{left}..z3{dir 80}..z4{right}..cycle;
draw z5.{down}..z6..{dir 50}z7;
draw z8..z10;
draw z9..z11;
labels(range 1 thru 11);
endchar;
beginchar("P",32ut#,34ut#,0ut#);
pickup va_pen;
z1=(0ut,2ut);
z2=(6ut,0ut);
z3=(12ut,8ut);
z4=(18ut,30ut);
z5=(4ut,24ut);
z6=(19ut,34ut);
z7=(32ut,26ut);
z8=(18ut,19ut);
draw z1..z2{right}..z3{z4-z3}...z4;
draw z5{dir 60}..z6{right}..{down}z7{down}..{dir 180}z8;
labels(range 1 thru 8);
endchar;
beginchar("Q",37ut#,34ut#,0ut#);
pickup va_pen;
z1=(24ut,23ut);
z2=(9ut,0ut);
z3=(1ut,16ut);
z4=(18ut,34ut);
z5=(10ut,4ut);
z6=(17ut,6ut);
z7=(23ut,0ut);
z8=(37ut,19ut);
z9=(23ut,34ut);
z10=(26ut,29ut);
z11=(32ut,34ut);
draw z1{down}..z2{left}..z3{dir 80}..z4{right}..cycle;
draw z5..z6{right}..z7{dir 40}..{dir 70}z8;
draw z9.{down}..z10..{dir 50}z11;
labels(range 1 thru 11);
endchar;
beginchar("R",36ut#,34ut#,0ut#);
pickup va_pen;
z1=(0ut,2ut);
z2=(6ut,0ut);
z3=(12ut,8ut);
z4=(18ut,30ut);
z5=(4ut,25ut);
z6=(17ut,34ut);
z7=(32ut,26ut);
z8=(18.5ut,19ut);
z9=(21.5ut,18ut);
z10=(26ut,4ut);
z11=(30ut,0ut);
z12=(36ut,6ut);
draw z1..z2{right}..z3{z4-z3}...z4;
draw z5..z6{right}..{dir 270}z7{dir 270}..{dir 185}z8;
draw z9{z12-z9}--z10{z10-z9}..z11{dir 10}..{dir 55}z12;
labels(range 1 thru 12);
endchar;
beginchar("S",27ut#,34ut#,0ut#);
pickup va_pen;
z1=(7ut,24ut);
z2=(22ut,22ut);
z3=(27ut,34ut);
z4=(17ut,27ut);
z5=(15ut,19ut);
z6=(12ut,8ut);
z7=(6ut,0ut);
z8=(0ut,2ut);
draw z1{dir -40}..z2..z3{left}..z4{z6-z4}..z5{z6-z4}..z6{z6-z4}..z7{left}..z8;
labels(range 1 thru 8);
endchar;
beginchar("T",31ut#,34ut#,0ut#);
pickup va_pen;
z1=(15ut,34ut);
z2=(12ut,8ut);
z3=(2ut,24ut);
z4=(31ut,34ut);
z5=(17.5ut,30ut);
z6=(6ut,0ut);
z7=(0ut,2ut);
draw z3{up}..z1{right}..z4;
draw z5--z2{z2-z5}..z6..z7;
labels(range 1 thru 7);
endchar;
beginchar("U",33ut#,34ut#,0ut#);
pickup va_pen;
z1=(7ut,34ut);
z2=(4ut,8ut);
z3=(9ut,0ut);
z4=(24.5ut,15ut);
z5=(29ut,34ut);
z6=(22ut,5ut);
z7=(25ut,0ut);
z8=(0ut,30ut);
z9=(33ut,6ut);
z10=(10ut,31ut);
draw z8{dir 70}..z1{right}..z10{z2-z10}--z2{z2-z10}..z3{dir 15}..z4;
draw z5--z6{z6-z5}..z7{dir 10}..{dir 55}z9;
labels(range 1 thru 10);
endchar;
beginchar(oct"334",33ut#,44ut#,0ut#);   
pickup va_pen;
z1=(7ut,34ut);
z2=(4ut,8ut);
z3=(9ut,0ut);
z4=(24.5ut,15ut);
z5=(29ut,34ut);
z6=(22ut,5ut);
z7=(25ut,0ut);
z8=(0ut,30ut);
z9=(33ut,6ut);
z10=(10ut,31ut);
z11=(19.5ut,44ut);
z12=(18ut,38ut);
z13=(24.5ut,44ut);
z14=(23ut,38ut);
draw z8{dir 70}..z1{right}..z10{z2-z10}--z2{z2-z10}..z3{dir 15}..z4;
draw z5--z6{z6-z5}..z7{dir 10}..{dir 55}z9;
draw z11..z12;
draw z13..z14;
labels(range 1 thru 14);
endchar;
beginchar("V",32ut#,34ut#,0ut#);
pickup va_pen;
z1=(0ut,30ut);
z2=(7ut,34ut);
z3=(10ut,31ut);
z4=(3ut,8ut);
z5=(9ut,0ut);
z6=(20ut,10ut);
z7=(24ut,34ut);
z8=(27ut,29ut);
z9=(32ut,34ut);
draw z1{dir 70}..z2{right}..z3{z4-z3}..z4{z4-z3}..z5{dir 10}..z6{z8-z6}..z7
 & z7{down}..z8..{dir 50}z9;
labels(range 1 thru 9);
endchar;
beginchar("W",55ut#,34ut#,0ut#);
pickup va_pen;
z1=(1ut,30ut);
z2=(8ut,34ut);
z3=(11ut,31ut);
z4=(5ut,8ut);
z5=(10ut,0ut);
z6=(23ut,15ut);
z7=(28ut,34ut);
z8=(23ut,15ut);
z9=(31ut,0ut);
z10=(41ut,12ut);
z11=(41ut,34ut);
z12=(45ut,29ut);
z13=(50ut,34ut);
draw z1{dir 70}..z2{right}..z3{z4-z3}..z4{z4-z3}..z5{right}..z6{z7-z6}..z7;
draw z8{z8-z7}..z9{right}..z10{z12-z10}..z11
 & z11{down}..z12..{dir 50}z13;
labels(range 1 thru 13);
endchar;
beginchar("X",26ut#,34ut#,0ut#);
pickup va_pen;
z1=(6ut,30ut);
z2=(13ut,34ut);
z3=(18ut,22ut);
z4=(15ut,12ut);
z5=(7ut,0ut);
z6=(3ut,6ut);
z7=(16ut,16ut);
z8=(30ut,28ut);
z9=(26ut,34ut);
z10=(20ut,0ut);
z11=(26ut,6ut);
draw z1{dir 60}..z2{right}..z3{z4-z3}..z4{z4-z3}..z5{left}..z6{up}..{dir30}z7;
draw z7{dir 30}..z8{up}..z9{left}..z3{z4-z3}..z4{z4-z3}
     ..z10{dir 10}..{dir 55}z11;
labels(range 1 thru 11);
endchar;
beginchar("Y",29ut#,34ut#,15ut#);
pickup va_pen;
z1=(9ut,31ut);
z2=(1ut,8ut);
z3=(5ut,0ut);
z4=(26ut,34ut);
z5=(16ut,-5ut);
z6=(9ut,-15ut);
z7=(6ut,-12ut);
z8=(9ut,-6ut);
z9=(29ut,6ut);
z10=(0ut,30ut);
z11=(7ut,34ut);
draw z10{up}..z11{right}..z1--z2{z2-z1}..z3{right}...z4{z4-z5}
 & z4--z5{z5-z4}..z6{left}..z7..z8{z9-z8}..{dir 55}z9;
labels(range 1 thru 11);
endchar;
beginchar("Z",29ut#,34ut#,0ut#);
pickup va_pen;
z1=(3ut,29ut);
z2=(11ut,34ut);
z3=(29ut,6ut);
z4=(21ut,0ut);
z5=(23ut,31ut);
z6=(9ut,6ut);
z7=(2ut,0ut);
z8=(0ut,3ut);
z9=(29ut,34ut);
z10=(12ut,18ut);
z11=(26ut,18ut);
draw z1{dir 60}..z2{right}..z5{z9-z5}..z9
  &z9{z6-z9}--z6{z6-z9}..z7..z8..z6{z4-z6}..z4{dir 10}..{dir 55}z3;
draw z10--z11;
labels(range 1 thru 11);
endchar;
beginchar(oct"377",16ut#,34ut#,15ut#);
pickup va_pen;
z1=(-1ut*.87,19ut*.87);
z2=(3ut*.87,29ut*.87);
z3=(-7ut*.87,-18ut*.87);
z4=(13ut*.87,39ut*.87);
z5=(19ut*.87,32ut*.87);
z6=(7ut*.87,21ut*.87);
z7=(18ut*.87,8ut*.87);
z8=(10ut*.87,0ut*.87);
z9=(4ut*.87,2ut*.87);
draw z3--z2{z2-z3}..z4{right}..z5{down}..z6{left}
 & z6{right}..z7{down}..z8{left}..z9;
labels(range 1 thru 9);
endchar;
font_quad 33pt#;
font_normal_space 7.6pt#;    
font_normal_stretch 5pt#;    
font_normal_shrink 3pt#;     
font_x_height 5.7pt#;
%
%
vor_m#:=-3ut#;
vor_i#:=-1ut#;
B_e#:=-8ut#;
T_e#:=-2ut#;
boundarychar:=oct"040";
ligtable oct"025": "-"   =:  oct"026";   
ligtable  "a":oct"344":"c":"d":"e":oct"037":"f":"g":"h":"i":"j":
          "k":"l":"m":"n":"p":"t":"u":oct"374":"x":"y":"z":
              "a" |=:| 133,
          oct"344"|=:| 133,
              "b" |=:| 134,
              "c" |=:| 133,
              "d" |=:| 133,
              "f" |=:| 134,
              "g" |=:| 133,
              "h" |=:| 134,
              "i" |=:| 146,
              "j" |=:| 146,
              "k" |=:| 134,
              "l" |=:| 134,
              "m" |=:| 132,
              "n" |=:| 132,
              "o" |=:| 133,
         oct"366" |=:| 133,
              "p" |=:| 146,
              "q" |=:| 133,
              "r" |=:| 132,
              "s" |=:| 146,
         oct"377" |=:| 134,   
              "t" |=:| 134,
              "u" |=:| 146,
         oct"374" |=:| 146,
              "v" |=:| 132,
              "w" |=:| 132,
              "x" |=:| 132,
              "y" |=:| 146,
              "z" |=:| 132;
ligtable "b":"o":oct"366":"r":"v":"w":             
              "e" |=: oct"037",                    
              " " |=:| oct"043",
              "-" |=:| oct"043",
              ")" |=:| oct"043",
              "." |=:| oct"043",
              ":" |=:| oct"043",
         oct"001" |=:| oct"043",    
         oct"047" |=:| oct"043",    
         oct"140" |=:| oct"043",    
         oct"020" |=:| oct"043",    
         oct"021" |=:| oct"043",    
         oct"042" |=:| oct"043",    
              "," |=:| oct"043",
              ";" |=:| oct"043",
              "!" |=:| oct"043",
              "?" |=:| oct"043",
              "a" |=:| oct"044",
          oct"344"|=:| oct"044",
              "b" |=:| oct"045",
              "c" |=:| oct"044",
              "d" |=:| oct"044",
              "f" |=:| oct"045",
              "g" |=:| oct"044",
              "h" |=:| oct"045",
              "i" |=:| oct"043",
              "j" |=:| oct"043",
              "k" |=:| oct"045",
              "l" |=:| oct"045",
              "m" |=:| oct"046",
              "n" |=:| oct"046",
              "o" |=:| oct"044",
         oct"366" |=:| oct"044",
              "p" |=:| oct"043",
              "q" |=:| oct"044",
              "r" |=:| oct"046",
              "s" |=:| oct"043",
         oct"377" |=:| oct"043",
              "t" |=:| oct"045",
              "u" |=:| oct"043",
         oct"374" |=:| oct"043",
              "v" |=:| oct"046",
              "w" |=:| oct"046",
              "x" |=:| oct"046",
              "y" |=:| oct"043",
              "z" |=:| oct"046";
ligtable ||:"(":oct"020":oct"022":"-":     
              "-"  =: oct"025",
              "a" |=:| 130,
         oct"344" |=:| 130,
              "b" |=:| 127,
              "c" |=:| 130,
              "d" |=:| 130,
              "e" |=:| 128,
              "f" |=:| 127,
              "g" |=:| 130,
              "h" |=:| 127,
              "i" |=:| 129,
              "j" |=:| 129,
              "k" |=:| 127,
              "l" |=:| 127,
              "o" |=:| 130,
         oct"366" |=:| 130,
              "p" |=:| 129,
              "q" |=:| 130,
              "s" |=:| 129,
              "t" |=:| 127,
              "u" |=:| 129,
         oct"374" |=:| 129,
              "y" |=:| 129,
         oct"377" |=:| 127;
ligtable "s":oct"377":
              "a" |=:| 149,
          oct"344"|=:| 149,
              "b" |=:| 150,
              "c" |=:| 149,
              "d" |=:| 149,
              "e" |=:| 148,
              "f" |=:| 150,
              "g" |=:| 149,
              "h" |=:| 150,
              "i" |=:| 151,
              "j" |=:| 151,
              "k" |=:| 150,
              "l" |=:| 150,
              "m" |=:| 152,
              "n" |=:| 152,
              "o" |=:| 149,
         oct"366" |=:| 149,
              "p" |=:| 151,
              "q" |=:| 149,
              "r" |=:| 152,
              "s" |=:| 151,
              "t" |=:| 150,
              "u" |=:| 151,
         oct"374" |=:| 151,
              "v" |=:| 152,
              "w" |=:| 152,
              "x" |=:| 152,
              "y" |=:| 151,
              "z" |=:| 152;
ligtable  "A":oct"304":"C":"E":"G":"H":"J":"K":"L":"M":
          "R":"U":"X":"Y":"Z":oct"334":
              "a" |=:| 131,
         oct"344" |=:| 131,
              "b" |=:| 143,
              "c" |=:| 131,
              "d" |=:| 131,
              "f" |=:| 143,
              "g" |=:| 131,
              "h" |=:| 143,
              "i" |=:| 144,
              "j" |=:| 144,
              "k" |=:| 143,
              "l" |=:| 143,
              "m" |=:| 145,
              "n" |=:| 145,
              "o" |=:| 131,
         oct"366" |=:| 131,
              "p" |=:| 144,
              "q" |=:| 131,
              "s" |=:| 144,
              "r" |=:| 145,
              "t" |=:| 143,
              "u" |=:| 144,
         oct"374" |=:| 144,
              "v" |=:| 145,
              "w" |=:| 145,
              "x" |=:| 145,
              "y" |=:| 144,
              "z" |=:| 145;
ligtable  "B":"D":"O":oct"326":"V":"W":"N":
              "a" |=:| 139,
         oct"344" |=:| 139,
              "b" |=:| 140,
              "c" |=:| 139,
              "d" |=:| 139,
              "e" |=:| 153,
              "f" |=:| 140,
              "g" |=:| 139,
              "h" |=:| 140,
              "i" |=:| 141,
              "j" |=:| 141,
              "k" |=:| 140,
              "l" |=:| 140,
              "m" |=:| 142,
              "n" |=:| 142,
              "o" |=:| 139,
         oct"366" |=:| 139,
              "p" |=:| 142,
              "q" |=:| 139,
              "r" |=:| 142,
              "s" |=:| 141,
              "t" |=:| 140,
              "u" |=:| 141,
         oct"374" |=:| 141,
              "v" |=:| 142,
              "w" |=:| 142,
              "x" |=:| 142,
              "y" |=:| 141,
              "z" |=:| 142,
             153 kern B_e#,
             141 kern vor_i#,
             142 kern vor_m#;
ligtable  "F":"I":"P":"S":"T":
              "a" |=:| 135,
         oct"344" |=:| 135,
              "b" |=:| 138,
              "c" |=:| 135,
              "d" |=:| 135,
              "e" |=:| 147,
              "f" |=:| 138,
              "g" |=:| 135,
              "h" |=:| 138,
              "i" |=:| 136,
              "j" |=:| 136,
              "k" |=:| 138,
              "l" |=:| 138,
              "m" |=:| 137,
              "n" |=:| 137,
              "o" |=:| 135,
         oct"366" |=:| 135,
              "p" |=:| 136,
              "q" |=:| 135,
              "r" |=:| 137,
              "s" |=:| 136,
              "t" |=:| 138,
              "u" |=:| 136,
         oct"374" |=:| 136,
              "v" |=:| 137,
              "w" |=:| 137,
              "x" |=:| 137,
              "y" |=:| 136,
              "z" |=:| 137,
              147 kern T_e#,
              137 kern vor_m#;
bye.
}%
\catcode`#=6
\immediate\closeout\lafont%
\usepackage{version}

\includeversion{IPL}
\def\authorrunning#1{}
\def\institute#1{}
\def\inst#1{}
\newcounter{df}
\def\definitionname{Definition}
\newtheorem{definition}[df]{Definition}
\newcounter{thm}
\def\theoremname{Theorem}
\newtheorem{theorem}[thm]{Theorem}

\newcounter{lem}
\def\lemmaname{Lemma}
\newtheorem{lemma}[lem]{Lemma}
\newcounter{cor}
\def\corollaryname{Corollary}
\newtheorem{corollary}[cor]{Corollary}
\newenvironment{proof}{\begin{trivlist}\item[\hspace{\labelsep}\bf Proof:]}{\end{trivlist}}
\def\bbbn{{\rm I\!N}} 
\let\oldmaketitle\maketitle
\def\maketitle{}
\let\oldendabstract\endabstract
\def\endabstract{\oldendabstract\oldmaketitle}

\usepackage{graphicx}
\usepackage{pstricks}
\usepackage{amsmath}
\usepackage{amssymb}
\usepackage{stmaryrd}
\usepackage{mathrsfs}
\usepackage{version}

\def\text#1{\textrm{#1}}

\def\precond#1{{\vphantom{#1}}^\bullet #1}
\def\postcond#1{{#1}^\bullet}

\def\production#1{\stackrel{#1}{\longrightarrow}}

\newfont{\fsc}{eusm10 scaled 1100}      

\def\powermultiset#1{\bbbn^{#1}}
\def\implies{\Rightarrow}

\def\mathrlap{\mathpalette\mathrlapinternal}
\def\mathrlapinternal#1#2{%
  \rlap{$\mathsurround=0pt#1{#2}$}}
\def\mathllap{\mathpalette\mathllapinternal}
\def\mathllapinternal#1#2{%
  \llap{$\mathsurround=0pt#1{#2}$}}
\def\trail#1{\text{~#1}}

\def\defitem#1{\emph{#1}}

\let\origexists\exists
\let\orignexists\nexists
\let\origforall\forall
\def\quantorSpace{}
\def\exists#1.{\quantorSpace\origexists\def\quantorSpace{\,}#1.\onespace}
\def\nexists#1.{\quantorSpace\orignexists\def\quantorSpace{\,}#1.\onespace}
\def\forall#1.{\quantorSpace\origforall\def\quantorSpace{\,}#1.\onespace}
\def\onespace#1{\let\argument=#1\ifx\onespace#1\else~\fi\argument}

\let\origmin\min
\def\min{\mathord{\origmin}}
\let\origmax\max
\def\max{\mathord{\origmax}}

\def\quireunderscore{_}
\def\quire#1{%
  \def\tmp{#1}%
  \ifx\tmp\quireunderscore%
    \def\tmp{\quireindexed_}
  \else%
    \def\tmp{\mathscr{Q}#1}
  \fi\tmp}
\def\quireindexed_#1{\mathscr{Q}_{\text{#1}}}

\newcommand{\refdf}[1]{\definitionname~\ref{df-#1}}

\newcommand{\refthm}[1]{\theoremname~\ref{thm-#1}}
\newcommand{\refcor}[1]{\corollaryname~\ref{cor-#1}}
\newcommand{\reflem}[1]{\lemmaname~\ref{lem-#1}}
\newcommand{\reffig}[1]{\figurename~\ref{fig-#1}}

\newcommand{\refsec}[1]{Section~\ref{sec-#1}}

\makeatletter
\def\goesto{\@transition\rightarrowfill}
\def\Goesto{\@transition\Rightarrowfill}
\def\ngoesto{\@transition\nrightarrowfill}
\def\nGoesto{\@transition\nRightarrowfill}
\def\@transition#1{\@ifnextchar[{\@@transition{#1}}{\@@transition{#1}[]}}
\newbox\@transbox
\newbox\@arrowbox
\def\rightarrowfill{$\m@th\mathord-\mkern-6mu%
  \cleaders\hbox{$\mkern-2mu\mathord-\mkern-2mu$}\hfill
  \mkern-6mu\mathord\rightarrow$}
\def\Rightarrowfill{$\m@th\mathord=\mkern-6mu%
  \cleaders\hbox{$\mkern-2mu\mathord=\mkern-2mu$}\hfill
  \mkern-6mu\mathord\Rightarrow$}
\def\@@transition#1[#2]%
   {\setbox\@transbox\hbox
       {\vrule height 1.5ex depth 1ex width 0ex\hskip0.25em$\scriptstyle#2$\hskip0.25em}
   \ifdim\wd\@transbox<1.5em
      \setbox\@transbox\hbox to 1.5em{\hfil\box\@transbox\hfil}\fi
   \setbox\@arrowbox\hbox to \wd\@transbox{#1}
   \ht\@arrowbox\z@\dp\@arrowbox\z@
   \setbox\@transbox\hbox{$\mathop{\box\@arrowbox}\limits^{\box\@transbox}$}
   \ht\@transbox\z@\dp\@transbox\z@
   \mathrel{\box\@transbox}}
\makeatother

\def\alignedcaption[#1&#2]{\mbox{\scriptsize $\mathllap{#1{}}\mathrlap{#2}$}}

\def\ie{i.e.\ }

\def\varnothing{\emptyset}

\def\restrictedto{\mathop\upharpoonright}

\newcommand{\inp}{\mathbin\in}                        
\def\idx#1#2#3#4#5{
  \def\argone{#1}
  \def\argtwo{#2}
  \def\argthree{#3}
  \def\argfour{#4}
  \def\argfive{#5}
  \def\testprime{'}
  \def\testdprime{''}
  \def\testtprime{'''}
  {\vphantom{\argthree}}_%
    {\vphantom{\argfour}\argone}^%
    {\vphantom{\argfive}{\argtwo}}%
  \argthree_%
    {\vphantom{\argone}\argfour}%
    \ifx\argfive\testprime\argfive\else%
    \ifx\argfive\testdprime\argfive\else%
    \ifx\argfive\testtprime\argfive\else%
    ^{\vphantom{\argtwo}\argfive}\fi\fi\fi%
}


\newcommand{\monus}{\mathrel{\raisebox{-0pt}[0pt][0pt]{$
                      \stackrel{\raisebox{-5pt}[0pt][0pt]{\huge$\cdot$}}
                               {\raisebox{0pt}[0pt][0pt]{$-$}}$}}}

\def\FS{{\it FS}}

\def\swap{\mbox{swap}}
\def\swapeq{\approx_{\mbox{s}}}
\def\adjacent{\mathrel{\,\,\rput(0,0.1){\psscalebox{-0.8}{\leftrightarrow}}\,\,}}
\def\connectedto{\adjacent^*}
\def\Lin{\text{Lin}}
\def\FS{\text{FS}}
\def\boldbracketa{%
  \psline[linewidth=0.1pt]{-}(0.15,-0.05)(0.05,-0.05)%
  \psline[linewidth=1.5pt]{-}(0.05,-0.05)(0.05,0.25)%
  \psline[linewidth=0.1pt]{-}(0.05,0.25)(0.15,0.25)%
  \phantom{[\,}%
}
\def\boldbracketb{%
  \psline[linewidth=0.1pt]{-}(0.0,-0.05)(0.1,-0.05)%
  \psline[linewidth=1.5pt]{-}(0.1,-0.05)(0.1,0.25)%
  \psline[linewidth=0.1pt]{-}(0.1,0.25)(0.0,0.25)%
  \phantom{]\,}%
}
\def\BD#1{\mathop{\boldbracketa#1\boldbracketb}}

\def\BDify#1{{\it BD}(#1)}
\newenvironment{itemise}{\begin{list}{$\bullet$}{\leftmargin 12pt \labelwidth\leftmargin\advance\labelwidth-\labelsep \topsep 4pt \itemsep 2pt \parsep 2pt}}{\end{list}}
\newenvironment{itemisei}{\begin{list}{$-$}{\leftmargin 12pt \labelwidth\leftmargin\advance\labelwidth-\labelsep \topsep 4pt \itemsep 2pt \parsep 2pt}}{\end{list}}
\def\justempty{}
\newenvironment{define}[1]{\begin{definition}\rm\def\arga{#1}\ifx\justempty\arga~\else#1\fi\\\vspace{-6ex}\\\mbox{~}\begin{itemise}}{\end{itemise}\end{definition}}

\DeclareFontFamily{T1}{la}{}
\DeclareFontShape{T1}{la}{m}{n}{<->s*[0.8571428571]la14}{}
\DeclareRobustCommand\la{\fontfamily{la}\fontencoding{T1}\selectfont}
\def\processfont#1{\text{\la #1}}
\def\NN{\processfont{N}}
\def\SS{\processfont{S}\,}
\def\TT{\processfont{T}}
\def\FF{\processfont{F}}
\def\MM{\processfont{M}}
\def\PP{P}
\def\QQ{Q}

\usepackage{pstricks}
\usepackage{pst-node}
\usepackage{graphicx}

\newcounter{netimage}
\def\p#1:#2;{\cnode #1{0.3}{n\thenetimage-#2}}
\def\P#1:#2;{\p #1:#2;\pscircle*#1{0.1}}
\def\q#1:#2:#3;{\p #1:#2;\rput#1{\rput[l](0.45,0){\large\it #3}}}
\def\Q#1:#2:#3;{\P #1:#2;\rput#1{\rput[l](0.45,0){\large\it #3}}}
\def\qq#1:#2:#3;{\p #1:#2;\rput#1{\rput[t](0,-0.5){\large\it #3}}}
\def\ql#1:#2:#3;{\p #1:#2;\rput#1{\rput[r](-0.45,0){\large\it #3}}}
\def\qt#1:#2:#3;{\p #1:#2;\rput#1{\rput[b](0,0.45){\large\it #3}}}
\def\Qt#1:#2:#3;{\P #1:#2;\rput#1{\rput[b](0,0.45){\large\it #3}}}
\def\Ql#1:#2:#3;{\P #1:#2;\rput#1{\rput[r](-0.45,0){\large\it #3}}}
\def\qx#1:#2:#3:#4;{\p #1:#2;\rput#1{\rput#4{\large\it #3}}}
\def\QXX#1:#2:#3:#4:#5;{\p #1:#2;\rput#1{\rput#4{\large\it #3}}\pscircle*#5{0.1}}
\def\s#1:#2:#3;{\p #1:#2;\rput#1{\rput(-0.03,0){\large\it #3}}}
\def\t#1:#2:#3;{\rput#1{\rnode{n\thenetimage-#2}{\psframebox{%
  \vbox to 0.6cm{\vfil\hbox to 0.6cm{\hfil\Large\it #3\hfil}\vfil}}}}}
\def\u#1:#2:#3:#4;{\rput#1{\rnode{n\thenetimage-#2}{\psframebox{%
  \vbox to 0.6cm{\vfil\hbox to 0.6cm{\hfil\Large\it #3\hfil}\vfil}}}}%
  \rput#1{\rput[l](0.6,0){\large\it #4}}}
\def\ut#1:#2:#3:#4;{\rput#1{\rnode{n\thenetimage-#2}{\psframebox{%
  \vbox to 0.6cm{\vfil\hbox to 0.6cm{\hfil\Large\it #3\hfil}\vfil}}}}%
  \rput#1{\rput[b](0,0.6){\large\it #4}}}
\def\ul#1:#2:#3:#4;{\rput#1{\rnode{n\thenetimage-#2}{\psframebox{%
  \vbox to 0.6cm{\vfil\hbox to 0.6cm{\hfil\Large\it #3\hfil}\vfil}}}}%
  \rput#1{\rput[r](-0.6,0){\large\it #4}}}
\def\a#1->#2;{\ncline{->}{n\thenetimage-#1}{n\thenetimage-#2}}
\def\A#1->#2;{\ncarc{->}{n\thenetimage-#1}{n\thenetimage-#2}}
\def\B#1->#2;{\ncarc[arcangle=-8]{->}{n\thenetimage-#1}{n\thenetimage-#2}}
\def\avlinearc{0.2}
\def\av#1[#2]-#3->[#4]#5;{
  \SpecialCoor
  \psline[linearc=\avlinearc]{->}([angle=#2]n\thenetimage-#1)#3([angle=#4]n\thenetimage-#5)
}

\makeatletter
\long\def\petrinet(#1)#2\end{\psscalebox{0.7}{\pspicture(#1)\stepcounter{netimage}#2\endpspicture}\end}

\makeatother

\begin{document}

\begin{IPL}
\title{Abstract Processes of Place/Transition Systems\tnoteref{dfg}}
\tnotetext[dfg]{This work was partially supported by the DFG (German Research Foundation).}

\author[nicta,unsw]{Rob van Glabbeek}
\ead{rvg@cs.stanford.edu}

\author[ips]{Ursula Goltz}
\ead{goltz@ips.cs.tu-bs.de}

\author[ips]{Jens-Wolfhard Schicke}
\ead{drahflow@gmx.de}

\address[nicta]{NICTA, Sydney, Australia}
\address[unsw]{School of Computer Science and Engineering, University of New South Wales, Sydney, Australia}
\address[ips]{Institute for Programming and Reactive Systems, TU Braunschweig, Germany}
\end{IPL}

\maketitle

\begin{abstract}
A well-known problem in Petri net theory is to formalise an
appropriate causality-based concept of process or run for
place/transition systems. The so-called individual token
interpretation, where tokens are distinguished according to their
causal history, giving rise to the processes of Goltz and Reisig, is
often considered too detailed.  The problem of defining a fully
satisfying more abstract concept of process for general
place/transition systems has so-far not been solved.  In this paper,
we recall the proposal of defining an abstract notion of process, here
called \defitem{BD-process}, in terms of equivalence classes of
Goltz-Reisig processes, using an equivalence proposed by Best and
Devillers. It yields a fully satisfying solution for at least all
one-safe nets. However, for certain nets which intuitively have
different conflicting behaviours, it yields only one maximal abstract
process.  Here we identify a class of place/transition systems, called
\defitem{structural conflict nets}, where conflict and concurrency
due to token multiplicity are clearly separated.
We show that, in the case of structural conflict nets, the equivalence
proposed by Best and Devillers yields a unique maximal abstract
process only for conflict-free nets.
Thereby BD-processes constitute a simple and fully
satisfying solution in the class of structural conflict nets.

\begin{IPL}
\begin{keyword}
  Petri nets, \mbox{P\hspace{-1pt}/T} systems, causal semantics, processes
\end{keyword}
\end{IPL}
\end{abstract}

\section{Introduction}\label{sec-intro}
\noindent
The most frequently used class of Petri nets are \defitem{place/}
\defitem{transition systems} (\mbox{P\hspace{-1pt}/T} systems) where
places may carry arbitrary many tokens, or a certain maximal number of
tokens when adding place capacities. These tokens are usually assumed
to be indistinguishable entities. Multiplicities of tokens may
represent for instance the number of available resources in a
system. The semantics of this type of Petri nets is well-defined with
respect to single firings of transitions or finite sets of transitions
firing in parallel (\defitem{steps}).  Sequences of transition firings
or of steps are the usual way to define the behaviour of a
\mbox{P\hspace{-1pt}/T} system. However, these notions of behaviour do
not fully reflect the power of Petri nets, as they do not explicitly
represent causal dependencies between transition occurrences.  If one
wishes to interpret \mbox{P\hspace{-1pt}/T} systems with a causal
semantics, several interpretations of what ``causal semantics'' should
actually mean are available. In the following we give a short overview.

Initially, Petri introduced the concept of a net together with the
definition of the firing rule.  He defined \defitem{condition/event
systems}, where---amongst other restrictions---places (then called
conditions) may carry at most one token. For this class of nets, he
proposed what is now the classical notion of a \defitem{process},
given as a mapping from an \defitem{occurrence net} (acyclic net with
unbranched places) to the original net \cite{petri77nonsequential,genrich80dictionary}.
A process models a run of the represented system, obtained by
choosing one of the alternatives in case of conflict. It records all
occurrences of the places and transitions visited during such a run,
together with the causal dependencies between them, which are given by
the flow relation of the net.
A linear-time causal semantics of a condition/event system is thus
obtained by associating with a net the set of its processes.
Depending on the desired level of abstraction, it may suffice to
extract from each process just the partial order of transition
occurrences in it. The firing sequences of transitions or steps can in
turn be extracted from these partial orders.
Nielsen, Plotkin and Winskel extended this to a branching-time semantics by
using occurrence nets with forward branched places \cite{nielsen81petri}.
These capture all runs of the represented system, together with the
branching structure of choices between them.

Goltz and Reisig generalised Petri's notion of process to general
\mbox{P\hspace{-1pt}/T} systems where multiple tokens may reside on a single place
\cite{goltz83nonsequential}. We call this notion of a process \defitem{GR-process}.
Engelfriet adapted GR-processes by additionally representing choices
between alternative behaviours \cite{engelfriet91branchingprocesses},
thereby adopting the approach of \cite{nielsen81petri} to
\mbox{P\hspace{-1pt}/T} systems, although without arc weights.
Meseguer, Sassone and Montanari extended this to cover also arc weights \cite{MMS97}.

However, Goltz argued that when abstracting from the identity of
multiple tokens residing in the same place, GR-processes do not
accurately reflect runs of nets, because if a Petri net is
conflict-free it should intuitively have only one run (for there are
no choices to resolve), yet it may have multiple GR-processes
\cite{goltz86howmany}.
This phenomenon is illustrated in \reffig{unsafe} in \refsec{semantics}.
A similar argument is made, e.g., in \cite{HKT95}.

At the heart of this issue is the question whether multiple tokens
residing in the same place should be seen as individual entities, so
that a transition consuming just one of them constitutes a conflict, or
whether such tokens are indistinguishable, so that taking one is
equivalent to taking the other.
Van Glabbeek and Plotkin call the former viewpoint the
\defitem{individual token interpretation} of P\hspace{-1pt}/T systems
and the latter the
\defitem{collective token interpretation} \cite{glabbeek95configuration}.
A formalisation of these interpretations occurs in \cite{glabbeek05individual}.
A third option, proposed by Vogler, regards
tokens only as notation for a natural number stored in each place;
these numbers are incremented or decremented when firing transitions, thereby
introducing explicit causality between any transitions removing tokens from
the same place \cite{vogler91executions}.
The GR-processes, as well as the work of
\cite{engelfriet91branchingprocesses,MMS97}, fit with the individual
token interpretation.

In this paper we continue the line of research of
\cite{goltz86howmany,MM88,DMM96,mazurkiewicz89multitree,HKT95}
to formalise a causality-based notion of run of a Petri net that
fits the collective token interpretation.
As remarked already in \cite{goltz86howmany},\linebreak
{\it what we need is some notion of an ``abstract
process''} and {\it a notion of maximality for abstract processes},
such that\linebreak[3]
{\it a \mbox{P\hspace{-1pt}/T}-system is conflict-free iff it has
exactly one maximal abstract process starting at the initial marking.}

A canonical candidate for such a notion of an abstract process is an
equivalence class of GR-processes, using an equivalence notion
($\equiv^\infty_{1}\hspace{-.5pt}$) proposed by Best and Devillers
\cite{best87both}.
This equivalence relation is generated by a transformation for
changing causalities in GR-processes, called \defitem{swapping}, that
identifies GR-processes which differ only in the choice which token
was removed from a place.  Here we call the resulting notion of a more
abstract process \defitem{BD-process}.  In the special case of
one-safe \mbox{P\hspace{-1pt}/T} systems (where places carry at most
one token), or for condition/event systems, no swapping is possible,
and a BD-process is just an isomorphism class of GR-processes.

Meseguer and Montanari formalise runs in a net $N$ as morphisms in a
category $\mathcal{T}(N)$ \cite{MM88}. In \cite{DMM96} it has been
established that these morphisms ``coincide with the commutative
processes defined by Best and Devillers'' (their terminology for
BD-processes). Likewise, Hoogers, Kleijn and Thiagarajan represent an
abstract run of a net by a \defitem{trace}, thereby generalising the
trace theory of Mazurkiewicz \cite{mazurkiewicz95tracetheory}, and
remark that ``it is straightforward but laborious to set up a 1-1
correspondence between our traces and the equivalence classes of
finite processes generated by the swap operation in [Best and
Devillers, 1987].''.  Mazurkiewicz applies a different approach with
his \defitem{multitrees} \cite{mazurkiewicz89multitree}, which record
possible multisets of fired transitions.  This approach applies to
nets without self-loops only, and we will not consider it in this
paper.

Best and Devillers have shown that their equivalence classes of
GR-processes are in a bijective correspondence with equivalence
classes of firing sequences, generated by swapping two adjacent
transitions firings that could have been done in one step.
This gives further evidence for the suitability of BD-processes as a
formalisation of abstract runs.
However, it can be argued that this solution is not fully satisfying
for general \mbox{P\hspace{-1pt}/T} systems, as we will recall in \refsec{semantics} using
an example from Ochma\'nski \cite{ochmanski89personal}. It identifies
GR-processes in such a way that certain \mbox{P\hspace{-1pt}/T} systems
with conflicts have only one maximal BD-process.

In this paper, we analyse the notion of conflict in \mbox{P\hspace{-1pt}/T} systems and
its interplay with concurrency and causality. We recall the definition
of the notion of conflict for \mbox{P\hspace{-1pt}/T} systems from
\cite{goltz86howmany}. We then define a subclass of \mbox{P\hspace{-1pt}/T} systems,
called \defitem{structural conflict nets}, where the interplay between
conflicts and concurrency due to token multiplicities is clearly
separated. On this class, the notions of syntactic and semantic
conflict are in complete agreement.
 We show that, for this subclass of \mbox{P\hspace{-1pt}/T} systems, the swap transformation by Best and Devillers yields a unique maximal BD-process only for those nets which are conflict-free. The proof of this result is quite involved; it is achieved by using the alternative characterisation of BD-processes by firing sequences from \cite{best87both}.

We proceed by defining basic notions for \mbox{P\hspace{-1pt}/T}
systems in Section \ref{sec-basic}.  In Section \ref{sec-semantics},
we define GR-processes and introduce the swapping equivalence. We give
examples and discuss the deficiencies of both GR-processes and
BD-processes for a collective token interpretation of general
\mbox{P\hspace{-1pt}/T} systems.  Section \ref{sec-conflict}
recapitulates the concept of conflict in \mbox{P\hspace{-1pt}/T}
systems and defines structural conflict nets. In Sections
\ref{sec-finiteruns} and \ref{sec-conflictruns}, respectively, we
introduce the alternative characterisation of BD-processes from
\cite{best87both} in terms of equivalence classes of firing sequences
and prove in this setting that structural conflict nets with a unique
maximal run are indeed conflict-free.
Finally we transfer the result to BD-processes in Section \ref{sec-results}.

\section{Place/Transition Systems}\label{sec-basic}

\noindent
We will employ the following notations for multisets.

\begin{define}{
  Let $X$ be a set.
}\label{df-multiset}
\item A {\em multiset} over $X$ is a function $A\!:X \rightarrow \bbbn$,
i.e.\ $A\in \powermultiset{X}\!\!$.
\item $x \in X$ is an \defitem{element of} $A$, notation $x \in A$, iff $A(x) > 0$.
\item For multisets $A$ and $B$ over $X$ we write $A \subseteq B$ iff
 \mbox{$A(x) \leq B(x)$} for all $x \inp X$;
\\ $A\cup B$ denotes the multiset over $X$ with $(A\cup B)(x):=\text{max}(A(x), B(x))$,
\\ $A + B$ denotes the multiset over $X$ with $(A + B)(x):=A(x)+B(x)$,
\\ $A - B$ is given by
$(A - B)(x):=A(x)\monus B(x)=\\\mbox{max}(A(x)-B(x),0)$, and\\
for $k\inp\bbbn$ the multiset $k\cdot A$ is given by
$(k \cdot A)(x):=k\cdot A(x)$.
\item The function $\emptyset\!:X\rightarrow\bbbn$, given by
  $\emptyset(x):=0$ for all $x \inp X$, is the \emph{empty} multiset over $X$.
\item If $A$ is a multiset over $X$ and $Y\subseteq X$ then
  $A\restrictedto Y$ denotes the multiset over $Y$ defined by
  $(A\restrictedto Y)(x) := A(x)$ for all $x \inp Y$.
\item The cardinality $|A|$ of a multiset $A$ over $X$ is given by
  $|A| := \sum_{x\in X}A(x)$.
\item A multiset $A$ over $X$ is \emph{finite}
  iff $|A|<\infty$, i.e.,
  iff the set $\{x \mid x \inp A\}$ is finite.
\end{define}
Two multisets $A\!:X \rightarrow \bbbn$ and
$B\!:Y\rightarrow \bbbn$
are \emph{extensionally equivalent} iff
$A\restrictedto (X\cap Y) = B\restrictedto (X\cap Y)$,
$A\restrictedto (X\setminus Y) = \emptyset$, and
$B \restrictedto (Y\setminus X) = \emptyset$.
In this paper we often do not distinguish extensionally equivalent
multisets. This enables us, for instance, to use $A \cup B$ even
when $A$ and $B$ have different underlying domains.
With $\{x,x,y\}$ we will denote the multiset over $\{x,y\}$ with
$A(x)\mathbin=2$ and $A(y)\mathbin=1$, rather than the set $\{x,y\}$ itself.
A multiset $A$ with $A(x) \leq 1$ for all $x$ is
identified with the set $\{x \mid A(x)=1\}$.

Below we define place/transition systems as net structures with an initial marking.
In the literature we find slight variations in the definition of \mbox{P\hspace{-1pt}/T}
systems concerning the requirements for pre- and postsets of places
and transitions. In our case, we do allow isolated places. For
transitions we allow empty postsets, but require at least one
preplace, thus avoiding problems with infinite self-concurrency.
Moreover, following \cite{best87both}, we restrict attention
to nets of \defitem{finite synchronisation}, meaning that each
transition has only finitely many pre- and postplaces.
Arc weights are included by defining the flow relation as a function to the natural numbers.
For succinctness, we will refer to our version of a \mbox{P\hspace{-1pt}/T} system as a \defitem{net}.

\begin{define}{}\label{df-nst}
\item[]
  A \defitem{net} is a tuple
  $N = (S, T, F, M_0)$ where
  \begin{itemise}
    \item $S$ and $T$ are disjoint sets (of \defitem{places} and \defitem{transitions}),
    \item $F: (S \mathord\times T \mathrel\cup T \mathord\times S) \rightarrow \bbbn$
      (the \defitem{flow relation} including \defitem{arc weights}), and
    \item $M_0 : S \rightarrow \bbbn$ (the \defitem{initial marking})
  \end{itemise}
  such that for all $t \inp T$ the set $\{s\mid F(s, t) > 0\}$ is
  finite and non-empty, and the set $\{s\mid F(t, s) > 0\}$ is finite.
\end{define}

\noindent
Graphically, nets are depicted by drawing the places as circles and
the transitions as boxes. For $x,y \inp S\cup T$ there are $F(x,y)$
arrows (\defitem{arcs}) from $x$ to $y$.  When a net represents a
concurrent system, a global state of this system is given as a
\defitem{marking}, a multiset of places, depicted by placing $M(s)$
dots (\defitem{tokens}) in each place $s$.  The initial state is
$M_0$.  The system behaviour is defined by the possible moves between
markings $M$ and $M'$, which take place when a finite multiset $G$ of
transitions \defitem{fires}.  When firing a transition, tokens on
preplaces are consumed and tokens on postplaces are created, one for
every incoming or outgoing arc of $t$, respectively.  Obviously, a
transition can only fire if all necessary tokens are available in $M$
in the first place. \refdf{firing} formalises this notion of
behaviour.

\begin{define}{
  Let $N\!=\!(S, T, F, M_0)$ be a net and $x\inp S\cup T$.
}\label{df-preset}
\item[]
The multisets $\precond{x},~\postcond{x}: S\cup T \rightarrow
\bbbn$ are given by $\precond{x}(y)=F(y,x)$ and
$\postcond{x}(y)=F(x,y)$ for all $y \inp S \cup T$.
If $x\in T$, the elements of $\precond{x}$ and $\postcond{x}$ are
called \emph{pre-} and \emph{postplaces} of $x$, respectively.
These functions extend to multisets
$X:S \cup T \rightarrow\bbbn$ as usual, by
$\precond{X} := \Sigma_{x \in S \cup T}X(x)\cdot\precond{x}$ and
$\postcond{X} := \Sigma_{x \in S \cup T}X(x)\cdot\postcond{x}$.
\end{define}

\begin{define}{
  Let $N \mathbin= (S, T, F, M_0)$ be a net,
  $G \in \bbbn^T\!$, $G$ non-empty and finite, and $M, M' \in \bbbn^S\!$.
}\label{df-firing}
\item[]
$G$ is a \defitem{step} from $M$ to $M'$,
written $M\production{G}_N M'$, iff
\begin{itemise}
  \item $^\bullet G \subseteq M$ ($G$ is \defitem{enabled}) and
  \item $M' = (M - \mbox{$^\bullet G$}) + G^\bullet$. 
\end{itemise}
We may leave out the subscript $N$ if clear from context.
Extending the notion to words $\sigma = t_1t_2\ldots t_n \in T^*$
we write $M\production{\sigma} M'$ for\vspace{-5pt}
$$
\exists M_1, M_2, \ldots, M_{n-1}.
M\!\production{\{t_1\}}\! M_1\!\production{\{t_2\}}\! M_2 \cdots M_{n-1}\!\production{\{t_n\}}\! M'\!\!.
$$
When omitting $\sigma$ or $M'$ we always mean it to be existentially quantified.
When $M_0 \production{\sigma}_N$, the word $\sigma$
is called a \defitem{firing sequence} of $N$.
The set of all firing sequences of $N$ is denoted by $\FS(N)$.
\end{define}

\noindent
Note that steps are (finite) multisets, thus allowing self-concurrency.
Also note that $M\goesto[\{t,u\}]$ implies $M\goesto[tu]$ and $M\goesto[ut]$.
We use the notation $t\in \sigma$ to indicate that the transition $t$
occurs in the sequence $\sigma$, and $\sigma\leq\rho$ to indicate that
$\sigma$ is a prefix of the sequence $\rho$, i.e.\ $\exists \mu. \rho=\sigma\mu$.

\section{Processes of Place/Transition Systems}\label{sec-semantics}

\noindent
We now define processes of nets.
A (GR-)process is essentially a conflict-free, acyclic net together
with a mapping function to the original net. It can be obtained by
unwinding the original net, choosing one of the alternatives in case
of conflict.
The acyclic nature of the process gives rise to a notion of causality
for transition firings in the original net via the mapping function.
Conflicts present in the original net are represented by one net yielding
multiple processes, each representing one possible way to decide the conflicts.

\begin{define}{}\label{df-process}
 \item[]
  A pair $\PP = (\NN, \pi)$ is a
  \defitem{(GR-)process} of a net\\ $N = (S, T, F, M_0)$
  iff
  \begin{itemise}\itemsep 3pt
   \item $\NN = (\SS, \TT, \FF, \MM_0)$ is a net, satisfying
   \begin{itemisei}
    \item $\forall s \in \SS. |\precond{s}| \leq\! 1\! \geq |\postcond{s}|
    \wedge\, \MM_0(s) = \left\{\begin{array}{@{}l@{\quad}l@{}}1&\mbox{if $\precond{s}=\emptyset$}\\
                                   0&\mbox{otherwise,}\end{array}\right.$
    \item $\FF$ is acyclic, \ie
      $\forall x \inp \SS \cup \TT. (x, x) \mathbin{\not\in} \FF^+$,
      where $\FF^+$ is the transitive closure of $\{(t,u)\mid F(t,u)>0\}$,
    \item and $\{t \mid (t,u)\in \FF^+\}$ is finite for all $u\in \TT$.
   \end{itemisei}
    \item $\pi:\SS \cup \TT \rightarrow S \cup T$ is a function with 
    $\pi(\SS) \subseteq S$ and $\pi(\TT) \subseteq T$, satisfying
   \begin{itemisei}
    \item $\pi(\MM_0) = M_0$, i.e.\ $M_0(s) = |\pi^{-1}(s) \cap \MM_0|$ for all $s\in S$, and
    \item $\forall t \in \TT, s \in S.
      F(s, \pi(t)) = |\pi^{-1}(s) \cap \precond{t}| \wedge\\
      F(\pi(t), s) = |\pi^{-1}(s) \cap \postcond{t}|$.
  \end{itemisei}
  \end{itemise}
  $P$ is called \defitem{finite} if $\TT$ (and  hence $\SS$) are finite.
\end{define}

\noindent
The conditions for $\NN$ ensure that a process is indeed a mapping from an occurrence net as defined in \cite{petri77nonsequential,genrich80dictionary} to the net $N$; hence we define processes here in the classical way as in \cite{goltz83nonsequential,best87both} 
(even though not introducing occurrence nets explicitly).

A process is not required to represent a completed run of the original net.
It might just as well stop early. In those cases, some set of
transitions can be added to the process such that another (larger) process is
obtained. This corresponds to the system taking some more steps and gives rise
to a natural order between processes.

\begin{define}{
  Let $\PP = ((\SS, \TT, \FF, \MM_0), \pi)$ and\\ $\PP' = ((\SS', \TT', \FF', \MM_0'), \pi')$ be
  two processes of the same net.
}\label{df-extension}
\item
  $\PP'$ is a \defitem{prefix} of $\PP$, notation $\PP'\leq \PP$, and 
  $\PP$ an \defitem{extension} of $\PP'$, iff 
    $\SS'\subseteq \SS$,
    $\TT'\subseteq \TT$,
    $\MM_0' = \MM_0$,
    $\FF'=\FF\restrictedto(\SS' \mathord\times \TT' \mathrel\cup \TT' \mathord\times \SS')$
    and $\pi'=\pi\restrictedto(\SS'\times \TT')$.
\item
  A process of a net is said to be \defitem{maximal} if 
  it has no proper extension.
\end{define}

\noindent
The requirements above imply that if $\PP'\leq \PP$, $(x,y)\in
\FF^+$ and $y\in \SS' \cup \TT'$ then $x\in \SS' \cup \TT'$.
Conversely, any subset $\TT'\subseteq \TT$ satisfying
$(t,u)\in \FF^+ \wedge u\in \TT' \Rightarrow t\in \TT'$ uniquely determines a
prefix of $\PP$.

Two processes $(\NN, \pi)$ and $(\NN', \pi')$
are \defitem{isomorphic} iff there exists an isomorphism $\phi$ from
$\NN$ to $\NN'$ which respects the process mapping, i.e.\
$\pi = \pi' \circ \phi$.
Here an isomorphism $\phi$ between two nets $\NN=(\SS, \TT, \FF,
\MM_0)$ and $\NN'=(\SS', \TT', \FF', \MM'_0)$ is a
bijection between their places and transitions such that
$\MM'_0(\phi(s))=\MM_0(s)$ for all $s\in\SS$ and
$F'(\phi(x),\phi(y))=F(x,y)$ for all $x,y\in \SS\cup\TT$.

The notion of a GR-process presented above
may fail to capture the intuitive concept of an
abstract run of the represented system
if the original net (e.g.\ the one in \reffig{unsafe})
reaches a marking with multiple tokens in one place.
According to \refdf{process}, such a net $N$ has processes
in which multiple places are mapped to the same place in $N$,
thereby representing multiple tokens there. (In \reffig{unsafe}
each of the two represented processes has two places that map to place
4 in $N$.)
If such a process features a transition whose counterpart in $N$
consumes just one of the tokens present in this place,
one needs to choose which of the equivalent places in the
process to connect to this transition.
Taking different choices gives rise to different processes
(such as the two in \reffig{unsafe}).
There is the philosophical question of whether multiple tokens
in the same place are distinct entities, that may induce distinct
causal relationships---the individual token interpretation,
or together constitute some state of the place---the collective token
interpretation \cite{glabbeek95configuration}.
In the collective token interpretation of \mbox{P\hspace{-1pt}/T}
systems, which we take in this paper, the choice which token to remove
should not lead to different runs.

\begin{figure}
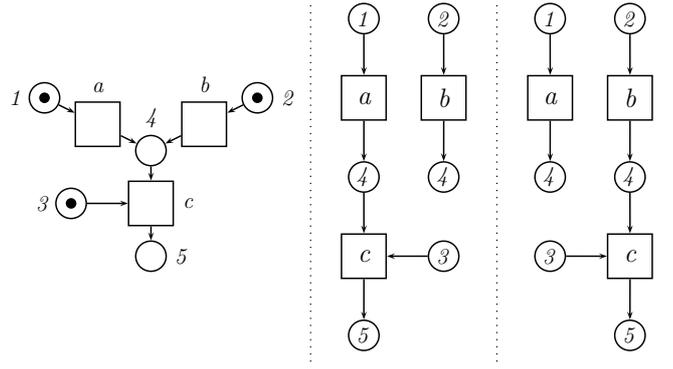

  \begin{center}
    \begin{petrinet}(12,7)
      \Ql(0.5,5):p:1;
      \Q(4.5,5):q:2;
      \ut(1.5,4.5):a::a;
      \ut(3.5,4.5):b::b;
      \qt(2.5,4):r:4;
      \Ql(1,3):cs:3;
      \u(2.5,3):c::c;
      \q(2.5,2):s:5;

      \a p->a; \a q->b; \a a->r; \a b->r;
      \a r->c; \a cs->c; \a c->s;

      \psline[linestyle=dotted](5.5,0)(5.5,7)

      \s(6.5,6.5):p2:1;
      \s(8.0,6.5):q2:2;
      \t(6.5,5):a2:a;
      \t(8.0,5):b2:b;
      \s(6.5,3.5):r2:4;
      \s(8.0,3.5):r2p:4;
      \t(6.5,2):c2:c;
      \s(8.0,2):cs2:3;
      \s(6.5,0.5):s2:5;

      \a p2->a2; \a q2->b2; \a a2->r2; \a b2->r2p; \a r2->c2; \a cs2->c2; \a c2->s2;

      \psline[linestyle=dotted](9.0,0)(9.0,7)

      \s(10.0,6.5):p3:1;
      \s(11.5,6.5):q3:2;
      \t(10.0,5):a3:a;
      \t(11.5,5):b3:b;
      \s(10.0,3.5):r3p:4;
      \s(11.5,3.5):r3:4;
      \t(11.5,2):c3:c;
      \s(10.0,2):cs3:3;
      \s(11.5,0.5):s3:5;

      \a p3->a3; \a q3->b3; \a a3->r3p; \a b3->r3; \a r3->c3; \a cs3->c3; \a c3->s3;

    \end{petrinet}
  \end{center}
  \vspace{-2ex}
  \caption{A net $N$ with its two maximal GR-processes. The process mappings are indicated by labels.}
  \label{fig-unsafe}
\end{figure}

As already described in Section \ref{sec-intro}, a possible strategy
to achieve a more abstract notion of process is to introduce a
suitable equivalence notion, identifying processes which only differ
with respect to the choices of tokens removed from the same place,
thus identifying for example the two processes in \reffig{unsafe}. A
candidate for such an equivalence was proposed in
\cite{best87both}. It is defined by first introducing a simple
transformation on GR-processes; it allows to change causalities in a
process by swapping outgoing arrows between places corresponding to
the same place in the system net. By reflexive and transitive closure
this yields an equivalence notion on finite GR-processes. Slightly
deviating from \cite{best87both}, we define the equivalence for
infinite processes via their finite approximations.

\begin{define}{
  Let $\PP = ((\SS, \TT, \FF, \MM_0), \pi)$ be a process and
  let $p, q \in \SS$ with $(p,q) \notin \FF^+\cup (\FF^{-1})^+$ and
  $\pi(p) = \pi(q)$.
  }
\label{df-swap}
\item[]
  Then $\swap(\PP, p, q)$ is defined as $((\SS, \TT, \FF', \MM_0), \pi)$ with
  \begin{equation*}
    \FF'(x, y) = \begin{cases}
      \FF(q, y) & \text{ iff } x = p,\, y \in \TT\\
      \FF(p, y) & \text{ iff } x = q,\, y \in \TT\\
      \FF(x, y) & \text{ otherwise. }
    \end{cases}
  \end{equation*}
\end{define}

\begin{define}{}\label{df-swapeq}
\item
  Two processes $\PP$ and $\QQ$ of the same net are
  \defitem{one step swapping equivalent} ($\PP \swapeq \QQ$) iff
  $\swap(\PP, p, q)$ is isomorphic to $\QQ$ for some places $p$ and $q$.

\item
We write $\swapeq^*$ for the reflexive and transitive closure of $\swapeq$,
and $\BD{\PP}$ for the $\swapeq^*$-equivalence class of a finite process $\PP$.
The prefix relation $\leq$ between processes is lifted to such
equivalence classes by $\BD{\PP'} \leq \BD{\PP}$ iff\pagebreak[3]
$\PP' \swapeq^* \QQ' \leq \QQ \swapeq^* \PP$
  for some $\QQ',\QQ$.\footnote{It is not hard to verify that if 
  $\PP \swapeq^* \QQ \leq \QQ'$ then $\PP \leq \PP' \swapeq^* \QQ'$ for
  some process $\PP'$. This implies that $\leq$ is a partial order on
  $\swapeq^*$-equivalence classes of finite processes. Alternatively,
  this conclusion will follow from \refthm{orderpreserving}.}

\item
  Two processes $\PP$ and $\QQ$ are \defitem{swapping equivalent}
  ($\PP \swapeq^\infty \QQ$) iff\vspace{-1em}
  \begin{equation*}
    \begin{split}
  &{\downarrow(\{\BD{\PP'}\mid \PP'\leq \PP,~\PP'~\mbox{finite}\})}
  =\\&{\downarrow(\{\BD{\QQ'}\mid \QQ'\leq \QQ,~\QQ'~\mbox{finite}\})}
    \end{split}
  \end{equation*}
  where $\downarrow$ denotes prefix-closure under $\leq$.

\item
$\!$We call a $\swapeq^\infty$\hspace{-2pt}-equivalence class of processes
a \defitem{{\small BD-}process}.
\end{define}

\noindent
Our definition of $\swapeq^\infty$ deviates from the definition of
$\equiv_1^\infty$ from \cite{best87both} to make proofs easier later
on.  We conjecture however that the two notions coincide.

Unfortunately, with respect to the intuition that different ways to
resolve a conflict should give rise to different processes, or the
requirement that a \mbox{P\hspace{-1pt}/T} system should have exactly one maximal process
iff it is conflict-free \cite{goltz86howmany}, the swapping
equivalence relation is too large. Consider for example the net
depicted in \reffig{badswapping}. In the initial situation only two of
the three enabled transitions can fire, which constitutes a conflict.
However, there is only one maximal process up to swapping equivalence.

\begin{figure}[t]
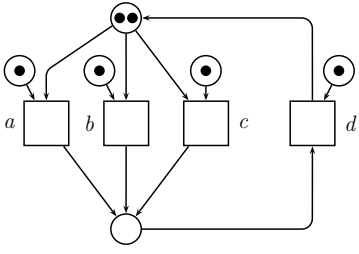

\vspace*{1em}
  \begin{center}
    \begin{petrinet}(12,5)
      \P(3.5,4):pa;
      \P(5,4):pb;
      \P(7,4):pc;
      \P(9.5,4):pd;

      \ul(4,3):a::a;
      \ul(5.5,3):b::b;
      \u(7,3):c::c;
      \u(9,3):d::d;

      \p(5.5,5):p;
      \p(5.5,1):q;

      \av p[210]-(4,4)->[90]a; \a pa->a; \a a->q;
      \a p->b; \a pb->b; \a b->q;
      \a p->c; \a pc->c; \a c->q;
      \av q[0]-(9,1)->[270]d; \a pd->d; \av d[90]-(9,5)->[0]p;

      \pscircle*(5.38,5){0.1}
      \pscircle*(5.62,5){0.1}
    \end{petrinet}
  \end{center}
  \vspace{-5ex}
  \caption{A net with only a single process up to swapping equivalence.$\!$}
  \label{fig-badswapping}
\end{figure}

This example has been known for quite some time
\cite{ochmanski89personal,DMM96,barylska09nonviolence}.
However, we are not aware of a solution, i.e.\ any formalisation of
the concept of a run of a net which correctly represents both
causality and parallelism of the net, and meets the above requirement.
For \defitem{one-safe} nets, i.e.\ nets where places will never carry
more than one token in all reachable markings, as well as for
condition/event systems, (GR-)processes up to isomorphism are
already known to constitute a fully satisfying solution in the above sense. In
this paper we will define a larger subclass of \mbox{P\hspace{-1pt}/T} systems, including
the net of \reffig{unsafe}, on which BD-processes form a satisfying solution.

\section{Conflicts in Place/Transition Systems}\label{sec-conflict}

\noindent
Since we desire an abstract notion of process with the property that
a \mbox{P\hspace{-1pt}/T}-system has exactly one maximal abstract process
iff it is conflict-free, it is essential to have a firm definition of
conflict.
Conflict is a basic notion in the theory of Petri nets, with an
\pagebreak[3]
easy and clear interpretation in one-safe \mbox{P\hspace{-1pt}/T} systems.
Two transitions are in (structural or syntactic) conflict if they
share a common preplace \cite{genrich80dictionary}.
In one-safe nets this coincides with a semantic notion of conflict: if two
transitions share a common preplace and they are both enabled,
only one of them may fire in the next step. Its firing will (at least
temporary) disable the other transition.

In general \mbox{P\hspace{-1pt}/T} systems, the situation concerning
conflicts is more complicated \cite{goltz86howmany}.
First consider the net of \reffig{unsafe}.  In the individual token
interpretation of nets, one could postulate that there is a conflict
between transition $c$ consuming the token produced by $a$ or by
$b$.  Under such an interpretation of conflict, the two maximal
GR-processes match our expectations exactly. Under the collective
token interpretation used in this paper, on the other hand, we
consider this net to be conflict-free, and thus expect only one
maximal process.

Next consider the net in \reffig{badswapping}. In the marking shown,
there are three enabled transitions sharing a preplace. Any pair of
two of them may fire concurrently (even though they share a preplace),
but not all three of them.  This ought to be seen as a conflict. Yet,
if there would be three tokens in the top-most place of that net, the net
would be conflict-free.  This shows that conflict is a more involved
notion here that may no longer be characterised structurally or
syntactically.  In \cite{goltz86howmany}, it was observed that the
traditional definition of conflict covered conflicts between two
transitions only, and the following definition of conflict in general
\mbox{P\hspace{-1pt}/T} systems was proposed.

\begin{define}{
  Let $N \mathbin= (S, T, F, M_0)$ be a net and $M \in
  \bbbn^S\!$.
}\label{df-semanticconflict}
\item
  A finite, non-empty multiset $G \in \bbbn^T$ is in
  \defitem{(semantic) conflict} in $M$ iff
  $
  (\forall t \in G. M\goesto[G \restrictedto \{t\}]) \wedge \neg M\goesto[G]$.

\item
  $N$ is \defitem{(semantic) conflict-free} iff
  no finite, non-empty multiset $G \in \bbbn^T$ is in semantic conflict in any
  $M$ with $M_0 \goesto[] M$.
\end{define}

\begin{trivlist}
\item[\hspace{\labelsep}\bf Remark:]
In a net $(S,T,F,M_0)$ with $S=\{s\}$, $T=\{t,u\}$, $M_0(s)=1$ and
$F(s,t)=F(s,u)=1$, the multiset $\{t,t\}$ is not enabled in $M_0$.
For this reason the multiset $\{t,t,u\}$ does not count as being in
conflict in $M_0$, even though it is not enabled. However, its subset
$\{t,u\}$ is in conflict.
\end{trivlist}

\noindent
We now propose a class of \mbox{P\hspace{-1pt}/T} systems where the
structural definition of conflict matches the semantic definition of
conflict as given above. We require that two transitions sharing a
preplace will never occur both in one step.

\begin{define}{
  Let $N \mathbin= (S, T, F, M_0)$ be a net.
}\label{df-structuralconflict}
\item[]
  $N$ is a \defitem{structural conflict net} iff
  $\forall t, u.
    (M_0 \goesto[]\goesto[\{t, u\}]) \implies
    \precond{t} \cap \precond{u} = \emptyset$.
\end{define}
Note that this excludes self-concurrency from the possible behaviours
in a structural conflict net: as in our setting every transition has
at least one preplace, $t = u$ implies $\precond{t} \cap \precond{u}
\ne \emptyset$.
Also note that in a structural conflict net a non-empty, finite
multiset $G$ is in conflict in a marking $M$ iff $G$ is a set and two distinct
transitions in $G$ are in conflict in $M$.

We will show that the  problem outlined in
Section \ref{sec-semantics}, namely that the transitive closure of the swapping relation equates processes which we would like to distinguish, vanishes for the class of
structural conflict nets. However, the proof of this result is not straightforward. In order to achieve this result, we first introduce the alternative characterisation of BD-processes from \cite{best87both} in terms of an equivalence notion on firing sequences in Section \ref{sec-finiteruns} and then characterise problematic situations and prove an appropriate result in terms of this alternative behaviour description in Section \ref{sec-conflictruns}.

\section{Abstract Runs of Place/Transition Systems}\label{sec-finiteruns}

\noindent
This section is largely based on \cite{best87both}, but with adapted
notation and terminology, and a different treatment of infinite runs.
We recall and reformulate these results in order to use them in the
following two sections.

The behaviour of a net can be described not only by its processes, but
also by its firing sequences. Firing sequences however
impose a total order on transition firings, thereby abstracting from
information on causal dependence, or concurrency, between transition
firings.  To retrieve this information we introduce an
\defitem{adjacency} relation on firing sequences with the intuition
that adjacent firing sequences represent the same run of the net.  We
then define \defitem{FS-runs} in terms of the resulting equivalence
classes of firing sequences.  Adjacency is similar to the idea of
Mazurkiewicz traces \cite{mazurkiewicz95tracetheory}, allowing to
exchange concurrent transitions.  However, it is based on the semantic
notion of concurrency instead of the global syntactic independence
relation in trace theory, similar as in the approach of generalising
trace theory in \cite{HKT95}.

\begin{define}{
  Let $N = (S, T, F, M_0)$ be a net, and $\sigma, \rho \in \FS(N)$.
}\label{df-connectedto}
\item
  $\sigma$ and $\rho$ are \defitem{adjacent}, $\sigma \leftrightarrow \rho$,
  iff $\sigma = \sigma_1 t u \sigma_2$, $\rho = \sigma_1 u t \sigma_2$ and
  $M_0 \goesto[\sigma_1]\goesto[\{t, u\}]$.

\item
  We write $\connectedto$ for the reflexive and transitive closure of $\adjacent$,
  and $[\sigma]$ for the $\connectedto$-equivalence class of a firing sequence $\sigma$.
\end{define}

\noindent
Note that $\connectedto$-related firing sequences contain the same
(finite) multiset of transition occurrences.
When writing $\sigma \connectedto \rho$ we implicitly claim that $\sigma, \rho \in \FS(N)$.
Furthermore $\sigma \connectedto \rho \wedge \sigma\mu \in \FS(N)$
implies $\sigma \mu \connectedto \rho \mu$ for all $\mu \in T^*$.

The following definition introduces the notion of \defitem{partial}
FS-run which is a formalisation of the intuitive concept of a finite,
partial run of a net.

\begin{define}{
  Let $N$ be a net and $\sigma, \rho \in \FS(N)$.
}\label{df-partialrun}
\item
  A \defitem{partial FS-run} of $N$ is an $\connectedto$-equivalence class
  of firing sequences.

\item
  A partial FS-run $[\sigma]$ is a \defitem{prefix} of another partial FS-run
  $[\rho]$, notation \mbox{$[\sigma]\leq[\rho]$}, iff
  $\exists \mu. \sigma \leq \mu \connectedto \rho$.
\end{define}
Note that
$\rho' \connectedto \rho \leq \mu \connectedto \sigma$ implies
$\exists \mu'. \rho' \leq \mu' \connectedto \mu$,
thus the notion of prefix is well-defined, and a partial order.

The following concept of an FS-run extends the notion of a partial
FS-run to possibly infinite runs, in such a way that an FS-run is
\pagebreak[2]
completely determined by its finite approximations.

\begin{define}{
  Let $N$ be a net.
}\label{df-run}
\item[]
  An \defitem{FS-run} $R$ of $N$ is a set of partial FS-runs of $N$ such that
  \begin{itemise}
    \item $[\rho] \leq [\sigma] \in R \implies [\rho] \in R$ ($R$ is prefix-closed), and
    \item $[\sigma], [\rho] \in R \implies \exists [\mu] \in R. [\sigma] \leq [\mu] \wedge [\rho] \leq [\mu]$ ($R$ is directed).
  \end{itemise}
\end{define}
The class of partial FS-runs and the finite elements (in the set
theoretical sense) in the class of FS-runs are in bijective correspondence.
Every finite FS-run $R$ must have a largest element, say
$[\sigma]$, and the set of all prefixes of $[\sigma]$ is $R$.
Conversely, the set of prefixes of a partial FS-run $[\sigma]$  
is a finite FS-run of which the largest element is again $[\sigma]$.

Similar to the construction of FS-runs as sets of equivalence classes of firing
sequences, we define \defitem{BD-runs} as sets of swapping equivalence classes
of finite GR-processes.
There is a close relationship between BD-runs and BD-processes, some
details of which we will give in \refsec{results}.

\begin{define}{
  Let $N$ be a net.
}\label{df-bdrun}
\item
  A \defitem{partial BD-run} of $N$ is a $\swapeq^*$-equivalence class
  of finite processes.

\item
  A \defitem{BD-run} of $N$ is a prefix-closed and directed set of
  partial BD-runs of $N$.
\end{define}
There is a bijective correspondence between partial BD-runs and the
finite elements in the class of BD-runs, just as in the case of FS-runs above.

Much more interesting however is the bijective correspondence between BD-runs
and FS-runs we will now establish. In particular, it allows us to prove
theorems about firing sequences and lift them to processes with relative ease.

\begin{define}{
  Let $N = (S, T, F, M_0)$ be a net, and let $\PP \mathbin= ((\SS, \TT, \FF, \MM_0), \pi)$ be a
  finite process of $N$ and $\sigma \mathbin\in \FS(N)$.
}\label{df-linset}
\item
  $\Lin(\PP) := \{\pi(t_1)\pi(t_2) \ldots \pi(t_n) \mid t_i \in \TT \wedge
  n = |\TT|
  \wedge
  t_i \FF^* t_j \implies i \leq j
  \}$ (the \defitem{linearisations} of $P$).
\item
  $\Pi(\sigma) := \{\PP \mid \sigma \in \Lin(\PP)\}$.
\end{define}
For one-safe nets, $\Pi(\sigma)$ contains exactly one process up to
isomorphism, for any firing sequence $\sigma$ \cite{best87both}.

\begin{theorem}\rm\label{thm-commonimplconnected}
  Let $N \mathbin{=} (S, T, F, M_0)$ be a net, $\sigma, \rho \in \FS(N)$,
  and $\PP, \QQ$ two finite processes of $N$.
  \begin{enumerate}
    \item
      If $\Pi(\sigma) \cap \Pi(\rho) \ne \varnothing$ then
      $\sigma \connectedto \rho$.
    \item
      If $\Lin(\PP) \cap \Lin(\QQ) \ne \varnothing$ then
      $\PP \swapeq^* \QQ$.
  \end{enumerate}
\end{theorem}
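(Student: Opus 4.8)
The plan is to prove both parts in parallel, since both rest on a single principle: causal independence in a process is exactly what licenses a local exchange — for transitions this is an adjacent transposition of a firing sequence, for places it is the operation $\swap$ of \refdf{swap}. Throughout I use the basic bookkeeping facts that every element of $\Lin(\PP)$ is a genuine firing sequence of $N$, and conversely that a process in $\Pi(\sigma)$ arises by executing $\sigma$, so that its transitions are enumerated by $\sigma$ and its places are the tokens created along the way.

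For part~(1), fix a process $\PP \in \Pi(\sigma) \cap \Pi(\rho)$. Then $\sigma$ and $\rho$ are both linearisations of $\PP$, i.e.\ two linear extensions of the finite partial order induced on $\TT$ by $\FF^+$, read off through $\pi$. I would invoke the classical fact that any two linear extensions of a finite poset are connected by a sequence of transpositions, each exchanging two elements that are adjacent in the sequence and incomparable in the poset. It then suffices to show that one such transposition of consecutive, $\FF^+$-incomparable transitions $t,t'$ of $\PP$ turns one firing sequence into an $\adjacent$-adjacent one. Firing the common prefix reaches a marking $M$; because $t$ and $t'$ are $\FF^+$-incomparable, their $\SS$-presets $\precond{t}$ and $\precond{t'}$ are disjoint (a process place has at most one outgoing transition, so a shared preplace is impossible) and neither consumes a token produced by the other, so their $\pi$-images impose disjoint token demands that are jointly met by $M$, giving $M \goesto[\{\pi(t),\pi(t')\}]$. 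This is precisely the adjacency condition of \refdf{connectedto}, and when $\pi(t)=\pi(t')$ the word is simply unchanged. Chaining these transpositions yields $\sigma \connectedto \rho$.

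For part~(2), a common linearisation $\sigma \in \Lin(\PP)\cap\Lin(\QQ)$ means $\PP,\QQ \in \Pi(\sigma)$, so it suffices to prove that $\PP,\QQ\in\Pi(\sigma)$ implies $\PP \swapeq^* \QQ$, which I would establish by induction on $|\sigma|$. The empty case gives the unique initial process up to isomorphism, and isomorphic processes are trivially $\swapeq^*$-equivalent (take $p=q$ in a swap). For $\sigma=\sigma'a$, let $t$ (resp.\ $u$) be the last transition of $\PP$ (resp.\ $\QQ$) in this linearisation; it maps to $a$ and is maximal, so deleting it together with its postplaces yields prefixes $\PP',\QQ' \in \Pi(\sigma')$. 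By the induction hypothesis $\PP' \swapeq^* \QQ'$. Since $t$ is maximal and its postplaces are leaves, adding them back creates no path between old nodes, so $\FF_{\PP}^+$ agrees with $\FF_{\PP'}^+$ on the places of $\PP'$; hence each swap witnessing $\PP'\swapeq^*\QQ'$ is still valid on $\PP$ itself. Performing them transforms $\PP$ into a process $\PP''$ whose prefix is isomorphic to $\QQ'$ and whose final transition, still mapping to $a$, consumes some admissible multiset of tokens.

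It remains to reconcile the token choices of the two final transitions. Both $\PP''$ and $\QQ$ consist of the common prefix $\QQ'$ plus one transition mapping to $a$ that consumes, for each preplace $s$ of $a$, exactly $F(s,a)$ of the tokens labelled $s$ that are still available (unconsumed, hence maximal) in $\QQ'$. The decisive observation is that any two such available places carrying the same label are necessarily $\FF^+$-incomparable — a maximal place has no outgoing arc, so neither can lie below the other — and therefore eligible for $\swap$; moreover in $\PP''$ they are consumed, if at all, only by the final transition, so a single $\swap$ of two of them merely redirects that transition's intake from one to the other. Exchanging tokens one preplace and one discrepancy at a time thus turns the consumption pattern of $\PP''$ into that of $\QQ$, giving $\PP'' \swapeq^* \QQ$ and, with the previous step, $\PP \swapeq^* \QQ$. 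I expect the main obstacle to be the bookkeeping of this inductive step: verifying that the prefix swaps lift to the full process without creating cycles or spurious causal links, and that after lifting the \emph{only} remaining difference is the final transition's choice of input tokens, which is then removable by swaps among the pairwise-incomparable available places.
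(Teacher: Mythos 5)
The paper gives no proof of this theorem at all: it simply defers to Best and Devillers \cite{best87both}, where this is one of the central results. So there is nothing internal to compare against; what you have done is reconstruct the argument, and your reconstruction is essentially correct and follows the same general strategy as the original. Part~(1) rests on the classical fact that two linear extensions of a finite poset are linked by transpositions of consecutive incomparable elements, plus the observation that $\FF^+$-incomparable transitions of a process have disjoint presets (since process places have at most one outgoing arc) whose tokens are all present after firing the common prefix, yielding the step enabledness required by the adjacency relation $\adjacent$; your handling of the degenerate case $\pi(t)=\pi(t')$ by reflexivity of $\connectedto$ is right. Part~(2) by induction on $|\sigma|$ is also sound, and you correctly identify the two points where the work lies: (i) lifting the swaps witnessing $\PP'\swapeq^*\QQ'$ to the full process, which works because the postplaces of the final transition are leaves, so $\FF^+$-incomparability of prefix places is unaffected and each lifted swap restricts to the original swap on the prefix; and (ii) reconciling the final transition's token intake, where the needed swaps are legal because an unconsumed place has no outgoing arcs and a place consumed only by the final transition can reach only that transition's (leaf) postplaces. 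Two things you leave implicit and should state as lemmas if this were written out in full: first, that $\swap(\PP,p,q)$ of a process under the stated side conditions is again a process (in particular that no cycle is created---a short case analysis on where a hypothetical cycle first uses a redirected arc); second, that the lifting in step~(i) must be threaded through the isomorphisms occurring in the definition of $\swapeq$, i.e.\ the places $p_i,q_i$ of the $i$-th swap live in $\PP_i$ and must be transported along the accumulated isomorphism before being swapped in the lifted process. Both are routine, and your closing paragraph shows you are aware of exactly these obligations.
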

\begin{proof}
  See \cite{best87both}.
\qed
\end{proof}

\begin{theorem}\rm\label{thm-adjacentimplcommon}
  Let $N \mathbin{=} (S, T, F, M_0)$ be a net, $\sigma, \rho \in \FS(N)$,
  and $\PP, \QQ$ two finite processes of $N$.
  \begin{enumerate}
    \item
      If $\sigma \adjacent \rho$ then $\Pi(\sigma) \cap \Pi(\rho) \ne \varnothing$.
    \item
      If $\PP \swapeq \QQ$ then $\Lin(\PP) \cap \Lin(\QQ) \ne \varnothing$.
  \end{enumerate}
\end{theorem}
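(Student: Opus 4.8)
The plan is to prove both parts by producing a single explicit witness lying in the claimed intersection. The common idea is that passing from $\sigma$ to $\rho$ (resp.\ from $\PP$ to $\QQ$) only reorders two causally \emph{independent} items, so a suitably chosen object has both arrangements as admissible linearisations.

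\emph{Part (1).} Write $\sigma=\sigma_1tu\sigma_2$ and $\rho=\sigma_1ut\sigma_2$ with $M_0\goesto[\sigma_1]M\goesto[\{t,u\}]$. First I would build a process $\PP$ by unfolding $\sigma$ in the standard way: start from $\MM_0$ (one place per token of $M_0$) and, for each fired transition, add a process transition mapped by $\pi$ to it, consuming available process places over its preset and creating fresh ones over its postset. This automatically yields $\sigma\in\Lin(\PP)$. The only freedom is \emph{which} equally-labelled process places a transition consumes, and I would exploit it at the $u$-step: since $\precond{t}+\precond{u}\subseteq M$, the process marking reached after $\sigma_1$ contains, for each place, enough tokens to serve both the occurrence $\bar t$ of $t$ and the occurrence $\bar u$ of $u$ from \emph{disjoint} process places already present after $\sigma_1$. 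With this choice $\bar u$ consumes nothing produced by $\bar t$, so there is no place $r$ with $\bar t\,\FF\,r\,\FF\,\bar u$; and since $\bar t,\bar u$ occupy adjacent positions in $\sigma$, no intermediate transition can lie causally between them, so in fact $\bar t\not\FF^+\bar u$. Symmetrically $\bar u\not\FF^+\bar t$ (this already follows from $\sigma\in\Lin(\PP)$, as $\bar t$ precedes $\bar u$). Hence $\bar t,\bar u$ are unordered, interchanging them keeps a valid topological order, and applying $\pi$ gives $\rho\in\Lin(\PP)$, whence $\PP\in\Pi(\sigma)\cap\Pi(\rho)$.

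\emph{Part (2).} Since isomorphic processes have identical linearisation sets, I may assume $\QQ=\swap(\PP,p,q)=((\SS,\TT,\FF',\MM_0),\pi)$. The relations $\FF$ and $\FF'$ agree except on the outgoing arcs of $p$ and $q$: writing $a,b$ for the (at most one each) consumers of $p,q$ in $\PP$, the only arcs in $R:=\FF\cup\FF'$ not already in $\FF$ are $(p,b)$ and $(q,a)$. The key claim is that $R$ is acyclic; granting this, any linear extension of $R$ read off on $\TT$ respects both $\FF^+$ and $(\FF')^+$ (since $\FF,\FF'\subseteq R$) and so lies in $\Lin(\PP)\cap\Lin(\QQ)$. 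To prove acyclicity I would suppose a simple cycle in $R$; as $\FF$ alone is acyclic, the cycle uses $(p,b)$, or $(q,a)$, or both. If it uses $(p,b)$ only, the remainder is an $\FF$-path $b\,\FF^+\,p$, which together with $q\,\FF\,b$ gives $q\,\FF^+\,p$, i.e.\ $(p,q)\in(\FF^{-1})^+$, contradicting the hypothesis on $p,q$. The case $(q,a)$ only is symmetric, yielding $p\,\FF^+\,q$. If it uses both, then travelling from $b$ to the tail $q$ of the other new arc is an $\FF$-path $b\,\FF^+\,q$, which with $q\,\FF\,b$ produces an $\FF$-cycle, again impossible.

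The main obstacle is the acyclicity argument in part (2): I must ensure the case split on a hypothetical simple cycle is genuinely exhaustive and that each branch uses only the single hypothesis $(p,q)\notin\FF^+\cup(\FF^{-1})^+$, including the degenerate situations where $p$ or $q$ lacks a producer or a consumer (e.g.\ if $p$ has no producer it cannot lie on any $R$-cycle, so those arcs simply disappear). In part (1) the delicate point is the token-selection step: I need that the unfolding of $\sigma$ can always realise the $u$-step from tokens present after $\sigma_1$ and disjoint from those used by $t$, which is exactly what $\precond{t}+\precond{u}\subseteq M$ guarantees; the remainder is the routine observation that swapping two adjacent, unordered transition occurrences preserves the linearisation property.
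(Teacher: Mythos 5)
Your proof is correct, but it cannot be compared line-by-line with the paper's, because the paper does not actually prove \refthm{adjacentimplcommon}: its ``proof'', like that of \refthm{commonimplconnected}, is literally the citation ``See \cite{best87both}''. What you have done is reconstruct the underlying Best--Devillers argument in the paper's own notation, and the reconstruction is sound. In part (1) the witness is the canonical unfolding of $\sigma=\sigma_1tu\sigma_2$, where the enabledness of the step $\{t,u\}$ after $\sigma_1$ (i.e.\ $\precond{t}+\precond{u}\subseteq M$) is exactly what licenses serving the occurrences of $t$ and $u$ from disjoint process places already present after $\sigma_1$; your observation that adjacency rules out any intermediate transition closes the only other way a causal chain from $\bar t$ to $\bar u$ could arise, so the two occurrences are unordered and both orders are linearisations. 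In part (2) the reduction to $\QQ=\swap(\PP,p,q)$ is legitimate (isomorphism preserves $\Lin$ since $\pi=\pi'\circ\phi$), the union $R=\FF\cup\FF'$ adds at most the two arcs $(p,b)$ and $(q,a)$ because places in a process have at most one consumer, and your three-way case analysis on a simple $R$-cycle is exhaustive and lands each time on a contradiction with $(p,q)\notin\FF^+\cup(\FF^{-1})^+$ or with acyclicity of $\FF$; a topological sort of $R$ then yields a word lying in $\Lin(\PP)\cap\Lin(\QQ)$, as both $\FF$ and $\FF'$ are contained in $R$ and $\pi$ is unchanged by the swap. The trade-off between the two routes is the obvious one: the paper stays short by outsourcing the result to \cite{best87both}, while your argument makes the bijection machinery of \refsec{finiteruns} self-contained, including the degenerate cases (missing consumers of $p$ or $q$) that a careful reader would otherwise have to check against the original reference.
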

\begin{proof}
  See \cite{best87both}.
\qed
\end{proof}

\begin{theorem}\rm\label{thm-procseqbijection}
  Let $N = (S, T, F, M_0)$ be a net, $\PP,\QQ$ two finite processes of $N$,
  $\sigma\in\Lin(\PP)$, and $\rho\in\Lin(\QQ)$.

  $\sigma \connectedto \rho$ iff $\PP \swapeq^* \QQ$.
\end{theorem}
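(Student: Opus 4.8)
The plan is to prove both directions by the same symmetric chaining argument, treating \refthm{commonimplconnected} and \refthm{adjacentimplcommon} as black boxes: these carry all the genuine content, and the work here is purely combinatorial bookkeeping. I first note that $\adjacent$ and $\swapeq$ are both symmetric (the swap operation is an involution and net isomorphism is symmetric), so $\connectedto$ is a chain of adjacencies and $\swapeq^*$ a chain of one-step swaps. Crucially, the two ``common-implies-closure'' statements \refthm{commonimplconnected}(1) and \refthm{commonimplconnected}(2) already deliver the \emph{full} relations $\connectedto$ and $\swapeq^*$, not merely single steps, which is exactly what makes the gluing below seamless.

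For the implication $\sigma \connectedto \rho \implies \PP \swapeq^* \QQ$, I would write out the adjacency chain $\sigma = \sigma_0 \adjacent \sigma_1 \adjacent \cdots \adjacent \sigma_n = \rho$. To each step $\sigma_i \adjacent \sigma_{i+1}$ I apply \refthm{adjacentimplcommon}(1) to obtain a finite process $R_i \in \Pi(\sigma_i) \cap \Pi(\sigma_{i+1})$. Consecutive witnesses $R_i$ and $R_{i+1}$ then share the linearisation $\sigma_{i+1} \in \Lin(R_i) \cap \Lin(R_{i+1})$, so \refthm{commonimplconnected}(2) gives $R_i \swapeq^* R_{i+1}$; chaining these yields $R_0 \swapeq^* R_{n-1}$. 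It remains to attach the endpoints: since $\sigma \in \Lin(\PP) \cap \Lin(R_0)$ and $\rho \in \Lin(R_{n-1}) \cap \Lin(\QQ)$, two further applications of \refthm{commonimplconnected}(2) give $\PP \swapeq^* R_0$ and $R_{n-1} \swapeq^* \QQ$, whence $\PP \swapeq^* \QQ$. The degenerate case $n=0$ (where $\sigma = \rho$) is immediate from \refthm{commonimplconnected}(2) applied to $\sigma \in \Lin(\PP) \cap \Lin(\QQ)$.

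The converse $\PP \swapeq^* \QQ \implies \sigma \connectedto \rho$ is exactly dual, interchanging the roles of processes and firing sequences and of the two theorems. I would write a one-step-swap chain $\PP = \PP_0 \swapeq \cdots \swapeq \PP_m = \QQ$, use \refthm{adjacentimplcommon}(2) to pick $\tau_j \in \Lin(\PP_j) \cap \Lin(\PP_{j+1})$ for each step, observe that consecutive $\tau_j, \tau_{j+1}$ share the process $\PP_{j+1}$ so that \refthm{commonimplconnected}(1) gives $\tau_j \connectedto \tau_{j+1}$, and finally attach $\sigma$ and $\rho$ through $\PP_0 \in \Pi(\sigma) \cap \Pi(\tau_0)$ and $\PP_m \in \Pi(\tau_{m-1}) \cap \Pi(\rho)$. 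Again the case $m=0$ is handled directly by \refthm{commonimplconnected}(1).

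I do not expect a serious obstacle, since the substantive lemmas are inherited from \cite{best87both}; the only points requiring care are the endpoint attachments and the trivial base cases, together with the implicit fact that isomorphic processes have identical linearisation sets, so that the isomorphism concealed inside the definition of $\swapeq$ never causes a mismatch between $\Lin(\cdot)$ sets. If anything qualifies as the ``hard part'', it is merely keeping the index bookkeeping of the two interleaved chains straight and checking that the witnesses $R_i$ and $\tau_j$ are finite processes, respectively firing sequences, so that the helper theorems genuinely apply.
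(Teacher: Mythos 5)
Your proposal is correct and takes essentially the same route as the paper: both directions rest on exactly the same black-box use of \refthm{commonimplconnected} and \refthm{adjacentimplcommon}, with the paper merely phrasing your explicit chain-gluing (intermediate witnesses $R_i$, endpoint attachment) as an induction on the length of the adjacency chain, and likewise dispatching the converse by exchanging the r\^oles of $\adjacent$/$\swapeq$ and $\Pi$/$\Lin$. No gap to report.
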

\begin{proof}
  ``$\Rightarrow$'': We show that
  $\forall n \in \bbbn. (\sigma \adjacent^n \rho \implies \PP \swapeq^* \QQ)$
  by induction on $n$.
  To start, $\sigma \adjacent^0 \rho$ means $\sigma = \rho$, so
  $\sigma \in \Lin(\PP) \cap \Lin(\QQ)$. By \refthm{commonimplconnected} then
  $\PP \swapeq^* \QQ$.
  For the induction step, we need to show that
  $\sigma \adjacent^n \rho \implies \PP \swapeq^* \QQ$.
  There must exists some $\mu$ such that
  $\sigma \adjacent \mu \adjacent^{(n-1)} \rho$.
  By \refthm{adjacentimplcommon} there is some
  $\PP' \in \Pi(\sigma) \cap \Pi(\mu)$.
  So $\sigma \in \Lin(\PP) \cap \Lin(\PP')$
  and per \refthm{commonimplconnected},
  $\PP \swapeq^* \PP'$. That
  $\mu \adjacent^{(n-1)} \rho \implies \PP' \swapeq^* \QQ$
  follows from the induction assumption.

  ``$\Leftarrow$'': Goes likewise but with the r\^oles of
  $\adjacent$ and $\swapeq$ and those of $\Pi$ and $\Lin$ exchanged.
\qed
\end{proof}
The functions $\Lin$ and $\Pi$ can be lifted to equivalence classes of
finite processes and firing sequences, respectively, by

$\Lin(\BD{\PP}) := [\sigma]$ for $\sigma$ an arbitrary element of $\Lin(\PP)$, and

$\Pi([\sigma]) := \BD{\PP}$ for $\PP$ an arbitrary element of $\Pi(\sigma)$.

\noindent
\refthm{procseqbijection} ensures that these liftings are
well-defined, and that they are inverses of each other, thereby
obtaining a bijective correspondence between partial BD-runs and
partial FS-runs. The following theorem tells that this bijection
respects the prefix ordering between runs.

\begin{theorem}\rm\label{thm-orderpreserving}
  Let $N = (S, T, F, M_0)$ be a net, $\PP,\QQ$ two finite processes of $N$,
  $\sigma\in\Lin(\PP)$, and $\rho\in\Lin(\QQ)$.

  $[\sigma] \leq [\rho]$ iff $\BD{\PP} \leq \BD{\QQ}$.
\end{theorem}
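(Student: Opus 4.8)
The plan is to prove both implications by transporting the two prefix orders between the firing-sequence world and the process world along the bijection already supplied by \refthm{procseqbijection}, which equates $\sigma\connectedto\rho$ with $\PP\swapeq^*\QQ$ whenever $\sigma\in\Lin(\PP)$ and $\rho\in\Lin(\QQ)$. The only genuinely new ingredients are two \emph{prefix-transfer} observations, both routine from the definition of $\Lin$ and the characterisation of process prefixes by downward-closed transition sets in the remark following \refdf{extension}. (i) If $\sigma$ is a prefix of some $\mu\in\Lin(R)$, fix an enumeration of the transitions of $R$ witnessing $\mu\in\Lin(R)$; its first $|\sigma|$ entries form a set that is closed under $\FF^+$ (any causal predecessor of an entry precedes it in the enumeration, hence is itself among the first $|\sigma|$), so by that remark it determines a prefix $R'\leq R$, and the corresponding initial word is a linearisation of $R'$ with respect to the restricted flow relation, giving $\sigma\in\Lin(R')$. (ii) Conversely, if $R_1\leq R_2$, then any $\tau_1\in\Lin(R_1)$ extends to some $\mu_2\in\Lin(R_2)$ with $\tau_1\leq\mu_2$: list the transitions of $R_2$ lying outside $R_1$ after those of $R_1$ in any order respecting $\FF^*$, which is possible because downward-closedness of the transitions of $R_1$ forbids any outside transition from causally preceding an inside one.

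For the forward direction, I would assume $[\sigma]\leq[\rho]$, so $\sigma\leq\mu\connectedto\rho$ for some $\mu$. First I would note $\Pi(\mu)\neq\varnothing$: along the adjacency chain $\mu=\nu_0\adjacent\cdots\adjacent\nu_m=\rho$, nonemptiness of $\Pi(\rho)\ni\QQ$ propagates backwards, since \refthm{adjacentimplcommon}(1) gives $\Pi(\nu_j)\cap\Pi(\nu_{j+1})\neq\varnothing$ at each step. Pick $R\in\Pi(\mu)$. Then $\mu\connectedto\rho$ with $\mu\in\Lin(R)$ and $\rho\in\Lin(\QQ)$ yields $R\swapeq^*\QQ$ by \refthm{procseqbijection}. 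Applying transfer observation (i) to $\sigma\leq\mu\in\Lin(R)$ produces a prefix $R'\leq R$ with $\sigma\in\Lin(R')$; since then $\sigma\in\Lin(\PP)\cap\Lin(R')$, \refthm{commonimplconnected}(2) gives $\PP\swapeq^*R'$. Chaining these, $\PP\swapeq^*R'\leq R\swapeq^*\QQ$, which is exactly $\BD{\PP}\leq\BD{\QQ}$ by the definition in \refdf{swapeq}.

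For the converse, I would assume $\BD{\PP}\leq\BD{\QQ}$, i.e.\ $\PP\swapeq^*R_1\leq R_2\swapeq^*\QQ$ for some $R_1,R_2$. Using transfer observation (ii), choose $\tau_1\in\Lin(R_1)$ and an extension $\mu_2\in\Lin(R_2)$ with $\tau_1\leq\mu_2$. Now \refthm{procseqbijection} applied to $\PP\swapeq^*R_1$ gives $\sigma\connectedto\tau_1$, and applied to $R_2\swapeq^*\QQ$ gives $\mu_2\connectedto\rho$, so $\sigma\connectedto\tau_1\leq\mu_2\connectedto\rho$. The implication noted after \refdf{partialrun} (taking its final relation reflexive) turns $\sigma\connectedto\tau_1\leq\mu_2$ into $\sigma\leq\mu'\connectedto\mu_2$ for some $\mu'$; composing with $\mu_2\connectedto\rho$ and transitivity of $\connectedto$ yields $\sigma\leq\mu'\connectedto\rho$, that is $[\sigma]\leq[\rho]$.

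I expect transfer observation (i) to be the main obstacle: extracting a bona fide process prefix from a mere prefix of a linearisation requires checking both that the initial segment of transitions is closed under $\FF^+$ (so that it determines a prefix at all) and that the induced word really is a linearisation of that prefix with respect to the restricted flow relation. Everything else—establishing $\Pi(\mu)\neq\varnothing$ and managing the four-step zig-zags that define the two prefix orders—is bookkeeping once the two transfer observations and \refthm{procseqbijection} are in hand. I would also take care that the argument does not presuppose that $\leq$ is a partial order on $\swapeq^*$-classes, since that conclusion is meant to follow from this theorem rather than the other way around; the proof above only exhibits and consumes the defining zig-zags, so no such circularity arises.
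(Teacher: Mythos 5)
Your proof is correct and takes essentially the same route as the paper's: both directions come down to Theorems~\ref{thm-commonimplconnected}--\ref{thm-procseqbijection} combined with the two prefix-transfer facts (a prefix of a linearisation determines, via the remark below \refdf{extension}, a process prefix of which it is a linearisation; and any linearisation of a process prefix extends to one of the full process), which is exactly the paper's argument. The only difference is that you explicitly justify $\Pi(\mu)\neq\varnothing$ by propagating nonemptiness along the adjacency chain with \refthm{adjacentimplcommon}, a point the paper's proof silently takes for granted when it picks $\QQ'\in\Pi(\rho')$.
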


\begin{proof}
``$\Leftarrow$'': Take $\PP'\in\BD{\PP}$ and $\QQ'\in\BD{\QQ}$ such that $\PP' \leq \QQ'$.
If follows immediately from Definitions~\ref{df-extension} and~\ref{df-linset}
that any $\sigma'\in\Lin(\PP')$ can be extended to some
$\rho'\in\Lin(\QQ')$, so that $[\sigma]=[\sigma']\leq[\rho']=[\rho]$.

``$\Rightarrow$'': Take $\rho'\inp[\rho]$ such that $\sigma \leq \rho'$ and take
$\QQ'=((\SS',\TT',\FF',\MM_0'),\pi')\inp\Pi(\rho')$. By \refdf{linset},
$\TT'$ can be 
enumerated as $t_1 t_1 \ldots t_n$ such that $  t_i \FF'^* t_j \implies
i \leq j$ and $\rho'=\pi'(t_1)\pi'(t_2) \ldots \pi'(t_n)$.
So $\sigma=\pi'(t_1)\pi'(t_2) \ldots \pi'(t_m)$ with $m \leq n$.
It follows from the remark below \refdf{extension} that
$\QQ'$ has a prefix $\PP'$ with transitions $\{t_1,\ldots,t_m\}$ such that
$\PP'\in\Pi(\sigma)$. Hence $\BD{\PP}=\BD{\PP'}\leq\BD{\QQ'}=\BD{\QQ}$.
\qed
\end{proof}
Since BD-runs are created out of the ordered space of partial BD-runs
of a net in the same way as FS-runs are created out of partial
FS-runs, this immediately yields a bijective correspondence also between
infinite (in the set-theoretical sense) BD-runs and infinite FS-runs.
This bijection respects the subset relation $\subseteq$ between runs,
which is the counterpart of the prefix relation $\leq$ between partial
runs, and hence also the concept of a maximal run.

\section{Abstract Runs of Structural Conflict Nets}\label{sec-conflictruns}

\noindent
This section formally uses FS-runs; however the results carry over
to BD-runs easily, via the bijection established in \refsec{finiteruns}.

Returning to the example of \reffig{badswapping},
we find that the depicted net has only one maximal FS-run:
$[abdc] \mathbin= [adbc] \mathbin= [adcb] = [acdb] \mathbin=
[cadb] \mathbin= [cdab] \mathbin= [cdba] \mathbin= [cbda] \mathbin=
[bcda] \mathbin= [bdca] \mathbin= [bdac] \mathbin= [badc]$.
The conflict between the initially enabled sets of transitions $\{a, b\}$, $\{b, c\}$,
and $\{a, c\}$ has not been resolved; rather all
possibilities have been included in the same run.
The following definition describes runs for which this is
not the case.

\begin{define}{
  Let $N = (S, T, F, M_0)$ be a net.
}\label{df-conflictfreerun}
\item[]
  An FS-run $R$ is \defitem{conflict-free} iff
  for all finite, non-empty multisets $G \inp \bbbn^T$ and all $\sigma \inp T^*$
  $$
  (\forall t \in G. [\sigma t^{G(t)}] \in R \wedge
  M_0 \goesto[\sigma] \goesto[G \restrictedto \{t\}]) \implies
  M_0 \goesto[\sigma] \goesto[G]\trail{.}
  $$
\end{define}

\noindent
We will now show that in structural conflict nets every run is conflict free.
For structural conflict nets thus holds what one would intuitively expect:
every conflict in the net gives rise to distinct runs, each one representing a
particular way to resolve the conflict.

\begin{theorem}\rm\label{thm-conflictfreerun}
  Let $N$ be a structural conflict net.

  Every FS-run $R$ of $N$ is conflict-free.
\end{theorem}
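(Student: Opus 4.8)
The plan is to argue by contradiction, turning a hypothetical conflict inside a run into two transitions whose relative order is forced to be simultaneously fixed and reversed. Suppose some FS-run $R$ of $N$ were not conflict-free. Unfolding \refdf{conflictfreerun}, there are a finite non-empty $G\in\bbbn^T$ and a $\sigma$ such that $[\sigma t^{G(t)}]\in R$ and $M_0\goesto[\sigma]\goesto[G\restrictedto\{t\}]$ for every $t\in G$, while $\neg\,M_0\goesto[\sigma]\goesto[G]$. Writing $M$ for the marking with $M_0\goesto[\sigma] M$, this says precisely that $G$ is in semantic conflict in $M$ (\refdf{semanticconflict}). Since $N$ is a structural conflict net, the characterisation of conflict noted after \refdf{structuralconflict} applies: $G$ is a set, and two distinct transitions $t,u\in G$ are in conflict in $M$. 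As $G$ is a set we have $\sigma t^{G(t)}=\sigma t$ and $\sigma u^{G(u)}=\sigma u$, so $[\sigma t],[\sigma u]\in R$, and from $M\goesto[\{t\}]$, $M\goesto[\{u\}]$ and $\neg\,M\goesto[\{t,u\}]$ I record that $t$ and $u$ share a preplace: otherwise $\precond t$ and $\precond u$ would be disjoint, whence $\precond t+\precond u=\precond t\cup\precond u\le M$ and $\{t,u\}$ would be enabled in $M$. Finally, directedness of $R$ (\refdf{run}) supplies a partial FS-run $[\mu]\in R$ above both, and unfolding the prefix order (\refdf{partialrun}) yields firing sequences $\nu=\sigma t w$ and $\nu'=\sigma u w'$ with $\nu\connectedto\mu\connectedto\nu'$, hence $\nu\connectedto\nu'$.

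The \emph{key observation}, and the point where the structural conflict hypothesis does the real work, is that adjacency can never interchange two transitions that share a preplace. A single adjacency step (\refdf{connectedto}) rewrites $\rho_1 xy\rho_2$ into $\rho_1 yx\rho_2$ only when $M_0\goesto[\rho_1]\goesto[\{x,y\}]$, and in a structural conflict net this forces $\precond x\cap\precond y=\emptyset$, so the swapped pair can never be $\{t,u\}$. I would make this quantitative by introducing the projection $\pi$ that sends a firing sequence to its subword over the two-letter alphabet $\{t,u\}$ (deleting all other transitions), and by checking that $\pi$ is invariant under $\adjacent$. This reduces to a one-line case analysis on whether the letters $\pi(x),\pi(y)$ commute: they do unless $\{\pi(x),\pi(y)\}=\{t,u\}$, i.e.\ unless the step swaps $t$ with $u$ --- exactly the case just excluded. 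Induction on the length of the $\adjacent$-chain between $\nu$ and $\nu'$ then gives $\pi(\nu)=\pi(\nu')$.

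Combining the two halves closes the argument: $\pi(\nu)=\pi(\sigma)\,t\,\pi(w)$ whereas $\pi(\nu')=\pi(\sigma)\,u\,\pi(w')$, so cancelling the common prefix $\pi(\sigma)$ forces $t=u$, contradicting the distinctness of $t$ and $u$; hence $R$ is conflict-free. I expect the main obstacle to lie not in the final contradiction or the reduction to a two-transition conflict, which are routine, but in isolating the right invariant: one must first notice that conflicting transitions in a structural conflict net necessarily share a preplace, then that preplace-sharing transitions are frozen in their relative order by $\connectedto$, and finally that the occurrences sitting immediately after $\sigma$ already disagree in $\nu$ and $\nu'$. Packaging this as invariance of $\pi$ is what keeps the combinatorics trivial; the more laborious route I would want to avoid is to trace the contested tokens of the shared preplace explicitly along the adjacency chain.
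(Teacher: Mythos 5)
Your proposal is correct and takes essentially the same route as the paper's proof: both reduce the semantic conflict to a pair of distinct transitions $t,u$ with non-disjoint presets, use directedness of $R$ to obtain an adjacency chain connecting extensions of $\sigma t$ and $\sigma u$, and exploit that an adjacency swap in a structural conflict net forces $\precond{t} \cap \precond{u} = \emptyset$. The only difference is organisational: the paper argues directly that a $t$--$u$ exchange must occur somewhere in the chain (hence presets are disjoint, hence $M\goesto[G]$), whereas you run the same fact contrapositively; your projection invariant on the subword over $\{t,u\}$ actually supplies the rigorous justification for the step the paper states only tersely (``somewhere in the sequence the transitions $t$ and $u$ must be exchanged''), so it is a welcome sharpening rather than a deviation.
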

\begin{proof}
  Let $R$ be an FS-run of $N = (S, T, F, M_0)$, $\sigma \in T^*$,
  and $G \in \bbbn^T$ a finite, non-empty multiset such that
  $\forall t \inp G. [\sigma t^{G(t)}] \inp R \wedge
  M_0 \goesto[\sigma]\goesto[G \restrictedto \{t\}]$.
  Let $M$ be the unique marking of $N$ with $M_0 \goesto[\sigma] M$.
  We have to show that $M \goesto[G]$.

  No transition $t$ can occur more than once in $G$ as self-concurrency cannot
  occur in structural conflict nets and
  $M_0 \goesto[\sigma]\goesto[G \restrictedto \{t\}]$.

  Let $t, u \in G$, $t \ne u$. Then $M \goesto[t]{} \wedge M \goesto[u]$.
  Since $R$ is directed, there exist $\rho, \mu \in T^*$ with
  $\sigma t \rho \connectedto \sigma u \mu$.
  By \refdf{connectedto}, $\sigma t \rho$ and $\sigma u \mu$ must
  contain the same multiset of transitions.
  Hence somewhere in the sequence
  $\sigma t \rho = \nu_1 \adjacent \nu_2 \adjacent \cdots \adjacent \nu_n
  = \sigma u \mu$ the transitions $t$ and $u$ must be exchanged, i.e.\
  $\nu_i = \nu' t u \nu'' \adjacent \nu' u t \nu'' = \nu_{i+1}$.
  Thus there is a marking $M'$ with $M_0 \goesto[\nu'] M' \goesto[\{t, u\}]$.
  Since $N$ is a structural conflict net,
  $\precond{t} \cap \precond{u} = \varnothing$.
  As this holds for all $t, u \in G$, it follows that $M \goesto[G]$.
\qed
\end{proof}

\begin{theorem}\rm\label{thm-conflictfree}
  Let $N$ be a structural conflict net.

  If $N$ has exactly one maximal FS-run then $N$ is conflict-free.
\end{theorem}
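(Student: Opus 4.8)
The plan is to prove the statement by contradiction, leaning almost entirely on the already-established \refthm{conflictfreerun}. So I would assume that $N$ has exactly one maximal FS-run $R^*$ but, contrary to the claim, is not conflict-free. By \refdf{semanticconflict} there is then a firing sequence $\sigma$ with $M_0 \goesto[\sigma] M$ and a finite non-empty multiset $G \inp \bbbn^T$ in semantic conflict at $M$, i.e.\ $M \goesto[G \restrictedto \{t\}]$ for every $t \inp G$ while $\neg\, M \goesto[G]$. The goal is to turn these conflicting alternatives into a violation of the conflict-freeness of $R^*$.

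The first key step is to show that the \emph{unique} maximal FS-run absorbs \emph{every} partial FS-run of $N$. Since the directed union of any $\subseteq$-chain of FS-runs is again prefix-closed and directed, hence an FS-run, Zorn's Lemma guarantees that every FS-run is contained in some maximal one; as $R^*$ is the only maximal FS-run, every FS-run is a subset of $R^*$. For an arbitrary partial FS-run $[\rho]$, the set of its prefixes is itself a finite FS-run with largest element $[\rho]$ (as already observed below \refdf{run}), so it lies inside $R^*$ and in particular $[\rho] \inp R^*$. Thus $R^*$ contains all partial FS-runs of $N$.

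Next I would check that the partial runs witnessing the conflict really live in $R^*$. For each $t \inp G$ we have $M \goesto[G \restrictedto \{t\}]$; since $G \restrictedto \{t\}$ is the step consisting of $G(t)$ copies of $t$, the note below \refdf{firing} yields the firing sequence $M \goesto[t^{G(t)}]$, whence $\sigma\, t^{G(t)} \inp \FS(N)$ and therefore $[\sigma\, t^{G(t)}] \inp R^*$ by the previous paragraph. (In a structural conflict net self-concurrency is excluded, so in fact $G(t)=1$ and $G$ is a set, but this refinement is not needed here.) At this point both hypotheses of \refdf{conflictfreerun}, namely $[\sigma\, t^{G(t)}] \inp R^*$ and $M_0 \goesto[\sigma]\goesto[G \restrictedto \{t\}]$, have been verified for all $t \inp G$.

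Finally I would invoke \refthm{conflictfreerun}, which tells us that $R^*$, being an FS-run of the structural conflict net $N$, is conflict-free in the sense of \refdf{conflictfreerun}. Applying that definition to the sequence $\sigma$ and the multiset $G$ yields $M_0 \goesto[\sigma]\goesto[G]$, i.e.\ $M \goesto[G]$, contradicting $\neg\, M \goesto[G]$. This contradiction establishes that $N$ is conflict-free. The argument is short precisely because the real work is hidden in \refthm{conflictfreerun}; the only genuinely new observation is that a net with a unique maximal FS-run possesses a single run containing all partial runs at once, so that conflicting alternatives can no longer be separated into distinct runs. I expect the main (and only minor) obstacle to be the Zorn's-Lemma bookkeeping of the second paragraph—checking that directed unions of FS-runs are again FS-runs and hence that every FS-run extends to a maximal one—while everything else is a direct translation between the marking-level notion of conflict-freeness in \refdf{semanticconflict} and the run-level notion in \refdf{conflictfreerun}.
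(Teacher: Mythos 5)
Your proposal is correct and follows essentially the same route as the paper's own proof: both exhibit the conflict via $\sigma$ and $G$, observe that the prefix-closure of each $[\sigma\, t^{G(t)}]$ is an FS-run which the unique maximal FS-run must therefore contain, and then derive a contradiction with \refthm{conflictfreerun}. The only difference is presentational: you spell out (via Zorn's Lemma) why every FS-run, hence every partial FS-run, is absorbed by the unique maximal one, a step the paper's proof uses implicitly in the phrase ``a unique maximal FS-run of $N$ would be a superset of $\{[\sigma t^{G(t)}] \mid t \in G\}$''.
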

\begin{proof}
  Let $N = (S, T, F, M_0)$.
  Assume $N$ has a conflict, i.e.\ there exists $\sigma \in T^*$, $M \in \bbbn^S$,
  $G \in \bbbn^T$, $G$ finite, with $M_0 \goesto[\sigma] M$, $\neg M
  \goesto[G]$ and $\forall t \in G. M \goesto[G \restrictedto \{t\}]$.
  We show that $N$ has no unique maximal FS-run.

  For every $t \in G$, the set $\{[\rho] \mid [\rho] \leq [\sigma t^{G(t)}]\}$
  constitutes an FS-run of $N$.
  Hence, a unique maximal FS-run of $N$ would be a superset of
  $\{[\sigma t^{G(t)}] \mid t \in G\}$, 
  and thus not conflict-free. However, every FS-run of $N$ must be
  conflict-free according to \refthm{conflictfreerun}.
\qed
\end{proof}

\section{BD-Processes of Structural Conflict Nets}\label{sec-results}

\noindent
In this section we adapt \refthm{conflictfree} from runs
to BD-processes, i.e.\ GR-processes up to $\swapeq^\infty$.
To this end, we give a mapping from GR-processes to BD-runs.

\begin{define}{
  Let $N$ be a net and $\PP$ a process thereof.
}\label{def-bdify}
\item[]
  Then $\BDify{\PP} := \downarrow \{\BD{\PP'} \mid
    \PP' \leq \PP,~ \PP' \text{ finite}\}$.
\end{define}
Note that, by \refdf{swapeq}, $\PP \swapeq^\infty \QQ$ iff
$\BDify{\PP} = \BDify{\QQ}$.

\begin{lemma}\rm\label{lem-bdprocessisrun}
  Let $N$ be a net and $\PP$ a process thereof.

  $\BDify{\PP}$ is a BD-run.
\end{lemma}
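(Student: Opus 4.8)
The plan is to verify directly the two defining properties of a BD-run from \refdf{bdrun}: prefix-closedness and directedness. Prefix-closedness is immediate from the definition, since $\BDify{\PP}$ is defined in \refdf{def-bdify} as $\downarrow(\cdots)$, the prefix-closure under $\leq$; any set of the form $\downarrow X$ is prefix-closed by construction. So the entire content of the lemma lies in establishing directedness.

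For directedness I would start from two arbitrary elements $\BD{\PP_1},\BD{\PP_2}\in\BDify{\PP}$ and produce a common upper bound inside $\BDify{\PP}$. By the definition of $\downarrow$ there are finite processes $\PP_1',\PP_2'\leq\PP$ with $\BD{\PP_1}\leq\BD{\PP_1'}$ and $\BD{\PP_2}\leq\BD{\PP_2'}$. The key step is to build a single finite prefix $\PP_3'\leq\PP$ that extends both $\PP_1'$ and $\PP_2'$; then $\BD{\PP_3'}$ lies in the generating set of $\BDify{\PP}$, hence in $\BDify{\PP}$ itself, and by transitivity of $\leq$ it dominates $\BD{\PP_1}$ and $\BD{\PP_2}$. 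To construct $\PP_3'$ I invoke the remark below \refdf{extension}: writing $\PP=((\SS,\TT,\FF,\MM_0),\pi)$, each finite prefix of $\PP$ corresponds uniquely to a finite subset $\TT'\subseteq\TT$ that is downward-closed under causality, i.e.\ $(t,u)\in\FF^+\wedge u\in\TT'\Rightarrow t\in\TT'$. Letting $\TT_1,\TT_2$ be the transition sets of $\PP_1',\PP_2'$, the union $\TT_1\cup\TT_2$ is again finite and downward-closed, so it determines a finite prefix $\PP_3'\leq\PP$ with $\PP_1'\leq\PP_3'$ and $\PP_2'\leq\PP_3'$. Since the prefix order on processes embeds into the prefix order on $\swapeq^*$-classes (take the trivial swapping equivalences $\PP_i'\swapeq^*\PP_i'$ and $\PP_3'\swapeq^*\PP_3'$ in the definition of $\leq$ from \refdf{swapeq}), this yields $\BD{\PP_i'}\leq\BD{\PP_3'}$ and hence $\BD{\PP_i}\leq\BD{\PP_3'}$ for $i=1,2$, as required.

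The main obstacle — really the only non-bookkeeping point — is the construction of the join $\PP_3'$: one must check that the union of two causally-downward-closed transition sets is again downward-closed and finite, and therefore genuinely corresponds to a prefix of $\PP$ that dominates both $\PP_1'$ and $\PP_2'$. This is exactly where the characterisation of prefixes by downward-closed transition sets from the remark below \refdf{extension} carries the argument; downward-closure of a union is routine, and finiteness follows since $\PP_1'$ and $\PP_2'$ are finite. The passage from the prefix order on processes to the lifted order on $\swapeq^*$-classes, and the verification that $\BD{\PP_3'}$ belongs to $\BDify{\PP}$, are then straightforward consequences of the relevant definitions.
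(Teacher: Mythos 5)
Your proof is correct and follows essentially the same route as the paper: prefix-closure is immediate from the $\downarrow$ operation, and directedness is established by constructing a common finite prefix of $\PP$ dominating two given ones. The only cosmetic difference is that the paper builds this join directly as the componentwise union $((\SS_1\cup\SS_2,\TT_1\cup\TT_2,\FF_1\cup\FF_2,{\MM_0}_1\cup{\MM_0}_2),\pi_1\cup\pi_2)$, whereas you obtain the same process via the downward-closed transition set $\TT_1\cup\TT_2$ and the remark below \refdf{extension} --- two descriptions of the identical object.
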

\begin{proof}
  Prefix-closure holds by definition of $\downarrow$, using the
  transitivity of $\leq$.

  For any ordered set $X$, if $X$ is directed, so is \mbox{$\downarrow\! X$}.
  Hence it suffices to show that $ \{\BD{\PP'} \mid \PP' \mathbin\leq \PP,\, \PP' \text{
  finite}\}$ is directed.
  Take finite $\PP_i := ((\SS_i, \TT_i, \FF_i, {\MM_0}_i), \pi_i) \leq \PP$
  for $i \mathbin= 1,2$.
  Then $\PP' := ((\SS_1 \cup \SS_2, \TT_1 \cup \TT_2, \FF_1 \cup \FF_2,
    {\MM_0}_1 \cup {\MM_0}_2),\linebreak[3] \pi_1 \cup \pi_2)\\ \leq \PP$
  and $\PP'$ is finite.
  Moreover, $\PP_i \leq \PP'$ for $i\mathbin=1,2$ and thus $\BD{\PP_i} \leq \BD{\PP'}$.
  Hence $\BDify{\PP}$ is directed.
\qed
\end{proof}

\begin{IPL}
\noindent
We now show that the existence of a unique maximal GR-process implies the
existence of a unique maximal BD-run.

\begin{lemma}\rm\label{lem-towardsinfinity}
  Let $N$ be a net.

  Every process $\PP$ of $N$ is a prefix of a maximal process of $N$.
\end{lemma}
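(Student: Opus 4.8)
The plan is a standard Zorn's Lemma argument applied to the poset $(\mathcal{P}, \leq)$, where $\mathcal{P}$ is the set of all processes $\QQ$ of $N$ with $\PP \leq \QQ$ and $\leq$ is the prefix relation of \refdf{extension}. A maximal element of $\mathcal{P}$ is automatically a maximal process of $N$: any proper extension of it still has $\PP$ as a prefix, hence lies in $\mathcal{P}$ and would contradict maximality there. Since $\PP \in \mathcal{P}$, it remains only to show that every chain in $\mathcal{P}$ has an upper bound in $\mathcal{P}$, and then to invoke Zorn.

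So let $\{\QQ_i\}_{i \in I}$, with $\QQ_i = ((\SS_i, \TT_i, \FF_i, \MM_0), \pi_i)$, be a chain in $\mathcal{P}$. All members share the same initial marking $\MM_0$, since by \refdf{extension} a prefix has the same initial marking and extensions introduce no new initial places (every place freshly added by an extension has a nonempty preset, so is not initial). I would form the componentwise union $\QQ_\infty := ((\bigcup_i \SS_i, \bigcup_i \TT_i, \bigcup_i \FF_i, \MM_0), \bigcup_i \pi_i)$; here $\bigcup_i \pi_i$ is well defined because the $\pi_i$ agree on overlaps. The crucial observation on which everything rests is that pre- and postsets are \emph{stable} along the chain: whenever $\QQ_i \leq \QQ_j$ and $x \in \SS_i \cup \TT_i$, the multisets $\precond{x}$ and $\postcond{x}$ computed in $\QQ_j$ equal those computed in $\QQ_i$. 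For $\precond{x}$ this is exactly the remark following \refdf{extension}: an $\FF_j$-predecessor of an element already present in $\QQ_i$ must itself lie in $\QQ_i$, where $\FF_j$ restricts to $\FF_i$. For the postset of a transition $t$ it follows from the second $\pi$-matching clause of \refdf{process}, which fixes the number of $t$-outputs mapping to each place of $N$ to one and the same finite value in every process containing $t$; since $\QQ_i$'s outputs of $t$ are contained in $\QQ_j$'s and have equal counts per target place, they coincide.

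Granting stability, I would verify the clauses of \refdf{process} for $\QQ_\infty$. Finiteness and non-emptiness of each transition's preset, and finiteness of the causal past $\{t \mid (t,u) \in (\bigcup_i \FF_i)^+\}$ of each $u$, reduce to the same property inside a single $\QQ_i$ containing the transition, because these sets do not grow. The remaining conditions---$|\precond{s}| \leq 1 \geq |\postcond{s}|$, the initial-marking clause, acyclicity of $\bigcup_i \FF_i$, disjointness of places and transitions, and the two $\pi$-matching equations---are all finitary: any counterexample mentions only finitely many places, transitions and arcs, and these already occur together in a single $\QQ_k$ of the chain (a finite subset of a chain has an upper bound), contradicting that $\QQ_k$ is a process. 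Once $\QQ_\infty$ is known to be a process, $\QQ_i \leq \QQ_\infty$ for all $i$ and $\PP \leq \QQ_\infty$ are immediate from the definitions, so $\QQ_\infty \in \mathcal{P}$ is the required upper bound.

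I expect the only real subtlety to be the postset direction of stability together with the condition $|\postcond{s}| \leq 1$: unlike a preset, the postset of a \emph{place} genuinely can enlarge when a process is extended by a transition that consumes a hitherto unused token. The point is that two distinct transitions consuming the same place $s$ would, by comparability of the chain, both appear in a single $\QQ_k$ (choose the larger of the two stages supplying them, and use that $s$ and the relevant arcs are preserved by restriction), thereby violating $|\postcond{s}| \leq 1$ in $\QQ_k$. All other verifications are routine bookkeeping once pre/postset stability is in hand.
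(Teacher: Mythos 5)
Your proof is correct and takes exactly the approach of the paper: Zorn's Lemma applied to the poset of processes extending $\PP$, with chains bounded above by componentwise union. The paper states this in three sentences and leaves all verification implicit; your elaboration (stability of pre/postsets along the chain, the finitary reduction of the process conditions to a single chain member) just fills in the details the paper omits.
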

\begin{proof}
  The set of all processes of $N$ of which $\PP$ is a prefix
  is partially ordered by $\leq$.
  Every chain in this set has an upper bound, obtained
  by componentwise union.
  Via Zorn's Lemma this set contains at least one maximal process.
  \qed
\end{proof}

\begin{lemma}\rm\label{lem-onemaximalproc}
  Let $N$ be a net.

  If $N$ has exactly one maximal GR-process up to $\swapeq^\infty$
  then $N$ has exactly one maximal BD-run.
\end{lemma}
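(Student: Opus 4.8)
The plan is to pin down the unique maximal BD-run explicitly as $R_0 := \BDify{\PP_\infty}$ for an arbitrary maximal process $\PP_\infty$, and then to show that under the hypothesis $R_0$ is in fact the \emph{greatest} BD-run with respect to $\subseteq$, i.e.\ it contains every other BD-run as a subset; a greatest element is automatically the unique maximal one. First I would invoke \reflem{towardsinfinity} to obtain at least one maximal process $\PP_\infty$, and \reflem{bdprocessisrun} to guarantee that $R_0 = \BDify{\PP_\infty}$ is genuinely a BD-run. Since the hypothesis states that all maximal processes are $\swapeq^\infty$-equivalent, the observation following \refdf{bdrun} (namely that $\PP \swapeq^\infty \QQ$ iff $\BDify{\PP} = \BDify{\QQ}$, by \refdf{swapeq}) shows that $R_0$ does not depend on the choice of $\PP_\infty$.

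The key step is the claim that \emph{every} partial BD-run of $N$ lies in $R_0$. Given an arbitrary finite process $\QQ_0$, I would apply \reflem{towardsinfinity} once more to extend it to a maximal process $\QQ_\infty \geq \QQ_0$. By hypothesis $\QQ_\infty \swapeq^\infty \PP_\infty$, whence $\BDify{\QQ_\infty} = R_0$. But $\QQ_0$ is a finite prefix of $\QQ_\infty$, so $\BD{\QQ_0}$ is one of the generators occurring in $\BDify{\QQ_\infty} = \downarrow\{\BD{\QQ'} \mid \QQ' \leq \QQ_\infty,\ \QQ'\text{ finite}\}$, and therefore $\BD{\QQ_0} \in R_0$. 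As $\QQ_0$ was arbitrary, $R_0$ contains the $\swapeq^*$-class of every finite process of $N$; in other words $R_0$ equals the set of \emph{all} partial BD-runs of $N$.

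From here the conclusion is immediate. By \refdf{bdrun} any BD-run $R$ is a set of partial BD-runs, each of which belongs to $R_0$, so $R \subseteq R_0$. Thus $R_0$ is the greatest BD-run and hence the unique maximal one, which is what had to be shown. The only place that requires genuine thought — and the main conceptual step — is the claim of the middle paragraph: it is exactly where the hypothesis (uniqueness of the maximal process up to swapping) is used to force even mutually conflicting finite processes to sit below one common maximal run, collapsing $R_0$ to the top of the lattice of BD-runs. Everything else is bookkeeping with the definitions of $\BDify{\cdot}$ and of BD-runs. In particular, I expect \emph{not} to need any representation result constructing a process out of an abstract BD-run (the genuinely hard converse direction); for this implication the map $\BDify{\cdot}$ from processes to runs, together with \reflem{towardsinfinity}, suffices.
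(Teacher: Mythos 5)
Your proof is correct and follows essentially the same route as the paper's: both arguments extend finite processes to maximal ones via \reflem{towardsinfinity}, use the hypothesis to identify their $\BDify{\cdot}$ images (via the observation that $\PP \swapeq^\infty \QQ$ iff $\BDify{\PP}=\BDify{\QQ}$), and conclude that the set of all partial BD-runs is itself a BD-run --- the greatest, hence the unique maximal one. The only difference is presentational: the paper verifies directedness of that set pairwise (extending two arbitrary finite processes and finding a common upper bound inside one $\BDify{\QQ}$), whereas you obtain directedness for free by identifying the whole set with $\BDify{\PP_\infty}$ for a single fixed maximal process $\PP_\infty$.
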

\begin{proof}
  Take any finite processes $\PP, \PP'$ of $N$.
  According to \reflem{towardsinfinity} there are maximal
  processes $\QQ, \QQ'$ of $N$ with \mbox{$\PP \leq \QQ$}, $\PP' \leq \QQ'$.
  As $N$ has exactly one maximal process up to $\swapeq^\infty$,
  $\QQ \swapeq^\infty \QQ'$ and
  $\BDify{\QQ} = \BDify{\QQ'}$. Hence as $\BD{\PP'} \in \BDify{\QQ'}$ also
  $\BD{\PP'} \in \BDify{\QQ}$. Since $\BDify{\QQ}$ is directed,
  there exists a $\QQ''$ with
  $\BD{\PP} \leq \BD{\QQ''} \wedge \BD{\PP'} \leq \BD{\QQ''}$.
  As this holds for any finite processes $\PP, \PP'$ the set of all
  equivalence classes of finite processes of $N$ is directed and hence a
  BD-run.
  Naturally this is the largest BD-run.
\qed
\end{proof}

\noindent
We can now conclude our main result: A semantic conflict in structural
conflict nets generates multiple maximal GR-processes even up to
swapping equivalence.

\begin{corollary}\rm\label{cor-structuralconflictbdprocs}
  Let $N$ be a structural conflict net.

  If $N$ has only one maximal GR-process up to $\swapeq^\infty$ then $N$ is conflict-free.
\end{corollary}
\begin{proof}
  This follows directly from \reflem{onemaximalproc} and
  \refthm{conflictfree}, using the bijection between FS-runs and
  BD-runs of \refsec{finiteruns}.
\qed
\end{proof}

\noindent
It would be interesting to show the reverse direction of
\refcor{structuralconflictbdprocs}, i.e.\ to prove that a structural
conflict net has exactly one maximal GR-process up to $\swapeq^\infty$ {\it iff} it is conflict-free.
We do conjecture that this holds for countable nets.
Even for processes generated by finite nets though,
we find it difficult to apply a similar proof technique by establishing the
necessary bijective correspondence between infinite BD-processes and infinite BD-runs.
\end{IPL}

\bibliographystyle{alpha}

\end{document}